\documentclass{mcom-l}

\usepackage{amssymb}
\newtheorem{theorem}{Theorem}[section]
\newtheorem{lemma}[theorem]{Lemma}

\theoremstyle{definition}
\newtheorem{definition}[theorem]{Definition}
\newtheorem{proposition}[theorem]{Proposition}

\theoremstyle{remark}
\newtheorem{remark}[theorem]{Remark}

\numberwithin{equation}{section}

\usepackage[colorlinks=true, urlcolor = blue, citecolor=black]{hyperref}
\usepackage{xcolor}
\usepackage{physics}

\usepackage[justification=centerlast]{caption}
\usepackage[labelformat=simple]{subcaption}

\usepackage{graphicx}

\begin{document}

\title{Finding hypergraph immersion is fixed-parameter tractable}

\author{Xiangyi~Meng}%
\address{Department of Physics, Applied Physics, and Astronomy, Rensselaer Polytechnic Institute, Troy, New York 12180, USA}%
\curraddr{}
\email{xmenggroup@gmail.com}
\thanks{}

\author{Yu~Tian}%
\address{Nordita, Stockholm University and KTH Royal Institute of Technology, Stockholm, Sweden}
\curraddr{Center for Systems Biology Dresden, Dresden, Germany}
\email{yu.tian.research@gmail.com}
\thanks{}

\subjclass[2024]{Primary 05C83 Secondary 05C85, 05C65}

\dedicatory{}

\begin{abstract}
Immersion minor is an important variant of graph minor, defined through an injective mapping from vertices in a smaller graph $H$ to vertices in a larger graph $G$ where adjacent elements of the former are connected in the latter by edge-disjoint paths. Here, we consider the immersion problem in the emerging field of hypergraphs. 
We first define hypergraph immersion by extending the injective mapping to hypergraphs. We then prove that finding a hypergraph immersion is fixed-parameter tractable, namely, there exists an $O(N^6)$ polynomial-time algorithm to determine whether a fixed hypergraph $H$ can be immersed in a hypergraph $G$ with $N$ vertices.
Additionally, we present the dual hypergraph immersion problem and provide further characteristics of the algorithmic complexity. 
\end{abstract}

\maketitle

\section{Introduction}

Graph minor is an important notion in graph theory. 
A graph\footnote{\footnotesize We consider finite, undirected graph, where multiedges may be present but loops are ignored, unless otherwise stated.} 
$H$ is referred to as a minor of the graph $G$ if $H$ can be formed by deleting edges, vertices, and by contracting edges, i.e.,~removing an edge while simultaneously merging two vertices it used to connect. 
The theory of graph minors can be dated back to one of the most fundamental theorems in graph theory, the Wagner’s theorem, where a graph is planar if and only if it does not have the complete graph $K_5$ or the complete bipartite graph $K_{3,3}$ as minors~\cite{wagner1937minor}. 
Notably, Robertson and Seymour have developed the theory of graph minors and their variants in a series of over 20 papers spanning 20 years. 
Specifically, as in the Wagner’s theorem, their results imply that analogous forbidden minor characterisation exists for every property of graphs that is preserved by deletions and edge contractions~\cite{robertson-seymour_rs10} (see \cite{lovasz2005minor} for a survey). 
Computationally, for every fixed graph $H$, it is possible to test whether $H$ is a minor of an input graph $G$ in polynomial time~\cite{robertson1995paths}. Together with the forbidden minor characterisation, this result may be extended to every graph property preserved by deletions and contractions~\cite{fellows1988linear}. 

Immersion minor is an important variant of graph minors. 
A graph $H$ is referred to as an immersion minor of the graph $G$ if there exists an injective mapping from vertices in $H$ to vertices in $G$, where images of $H$'s neighboring vertices are connected in $G$ by edge-disjoint paths. 
The immersion problem plays a prominent role in various settings. 
In graph drawing, immersion minor arises as the planarisation of non-planar graphs, which allows drawing methods for planar graphs to be extended to non-planar graphs~\cite{buchheim2014drawing}. In various combinatorial problems, such as determining the cutwidth and its multidimensional generalisations, and congestion problems, immersion is the key to solving them in linear time.
One can also think of circuit fabrication, parallel computation, and network design, and we refer the reader to~\cite{fellows1992app} for more discussion.   

In particular, the relationship between immersion minor and quantum operations has attracted recent research interest~\cite{tian2024quanimmersion}. In quantum networks, vertices that are not directly connected by an edge can still be entangled through a fundamental operation known as entanglement routing 
\cite{q-netw-route_lphnm20,q-netw-route_pkttjbeg19,q-netw-route_p19,multipartite-q-netw-route_sb23}, provided that there is a path of edges connecting the vertices. The operation of entanglement routing, however, utilises all the edges along the path that connect the vertices, and subsequently deletes the utilised edges from the graph, leading to a dynamic process called path percolation~\cite{path-percolation_mhrk24}. This edge-disjoint nature of entanglement routing (namely, each edge can only be utilised once) enables an exact mapping to immersion minor, 
allowing us to reformulate entanglement routing in quantum information as a graph-theoretical problem.

Unlike in ordinary graphs, where edges connect exactly two vertices, hyperedges can connect more than two vertices, providing a natural graph-equivalence of multipartite entangled states.  These states, with their inherent higher-order correlations, are crucial building blocks for future quantum network design.
\cite{multipartite-q-netw_ctpv21,multipartite-q-netw_hpe19,multipartite-commun-q-netw,multipartite-q-netw_wxkhlhgsp20}. Hence, hypergraphs, comprising vertices and hyperedges, provide a natural framework in the quantum setting. 
Not only being mathematically interesting and necessary in quantum information processing, hypergraphs are also more adequate for the modeling of a variety of empirical interaction data, such as in chemistry~\cite{jost2019hyperL}, biology~\cite{klamt2009biology}, and social sciences 
\cite{bianconi2021hyper,krumov2011collab,Kwang2023core,taramasco2010collab}. The research field of hypergraphs has boomed in the past decades, both theoretically and empirically. 
However, to our best knowledge, the notion of immersion minor has not been explored in hypergraphs. 

{In this paper, we fill in this gap by first extending the definition of immersion minor, together with the relevant operations on graphs, to the general setting of hypergraphs. This requires us to decompose the single operation (``lifting'') that defines immersion minor in ordinary graphs to {two operations} (which we term ``coalescence'' and ``dewetting'') in hypergraphs. The two operations equivalently define hypergraph immersion and reduce to the lifting operation for ordinary graphs.
The main result of our paper is to show that a polynomial algorithm exists whose time complexity is in the order of a fixed power of the size of the larger hypergraph $G$. In other words, the hypergraph immersion problem is {fixed-parameter tractable}. 
This provides significant theoretical support to the graph immersion algorithms and also other relevant algorithms in the case of hypergraphs, such as the algorithms for graph embedding and downstream applications in areas including quantum communication. 
Furthermore, we introduce the concept of dual hypergraph immersion, defined as immersion between the transpose of hypergraphs, showcasing the generality of our results.}

Our paper is organised as follows. In section~\ref{sec:preliminary}, we introduce basic concepts and the definition of immersion minor in ordinary graphs, and then discuss important notions in hypergraphs. In section~\ref{sec:main_results}, we present two equivalent definitions of immersion in hypergraphs, one through edge-disjoint paths extended for hypergraphs, and the other through immersion operations on hypergraphs. We then give the proof structure of our main result that finding hypergraph immersion is fixed-parameter tractable. In section~\ref{sec:proof_ordinary}, we first prove the case when the immersed graph $H$ is an ordinary graph. In section~\ref{sec:proof_hyper}, we extend the proof to the immersed graph $H$ being a general hypergraph, where the extra complexity is handled by considering the finiteness of topological classes for general hypergraphs. Finally, in section~\ref{sec:dual}, we introduce the dual hypergraph representation and provide further characteristics for the problem of dual immersion for hypergraphs.

\section{Preliminary}\label{sec:preliminary}
In this section, we introduce the basic concepts and the graph immersion problem, followed by key hypergraph notions.

\subsection{Immersion problem in (ordinary) graphs}
A graph consists of vertices and edges connecting them. We denote by $V(G)$ and $E(G)$ for the set of vertices and edges, respectively. 
In this paper, we primarily consider finite, undirected graphs, which may contain multiedges between vertices unless explicitly stated as \emph{simple graphs} (i.e.,~without multiedges). Loops (edges starting and ending at the same vertex) are excluded from our analysis.
We may explicitly use the term \emph{ordinary graph} to distinguish from hypergraphs, emphasising the fact that each edge connects precisely two vertices in an ordinary graph; that is, $\forall e\in E(G)$, $e= \{u, v\}$ with $u,v\in V(G)$.

\begin{definition}[immersion in ordinary graphs~\cite{robertson-seymour_rs10}]
    Let $G, H$ be loopless graphs. An \textit{immersion} of $H$ in $G$ is a function $\alpha$ with domain $V(H)\cup E(H)$ such that: 
\begin{enumerate}
    \item $\alpha(v) \in V(G)$ for all $v\in V(H)$, and $\alpha(u)\ne\alpha(v)$ for all distinct $u,v\in V(H)$;
    \item for each edge $e\in E(H)$, if $e$ has distinct ends $u,v$, then $\alpha(e)$ is a path of $G$ with ends $\alpha(u), \alpha(v)$; 
    \item for all distinct $e_1,e_2\in E(H)$, $E(\alpha(e_1)\cap \alpha(e_2))=\emptyset$, where $E(\alpha(e_1)\cap \alpha(e_2))$ denotes the common edges of $\alpha(e_1)$ and $\alpha(e_2))$.
\end{enumerate}
\end{definition}
Immersion can be equivalently defined through the \textit{lifting operation} on graphs:
\begin{definition}
    An immersion of $H$ in $G$ exists if and only if $H$ can be obtained from (a subgraph of) $G$ by \textit{lifting}, an operation on adjacent edges:
    \begin{itemize}
        \item Given three vertices $v$, $u$, and $w$, where $\{v,u\}$ and $\{u,w\}$ are edges in the graph, the lifting of $vuw$, or equivalently of $\{v,u\}, \{u,w\}$ is the operation that deletes the two edges $\{v,u\}$ and $\{u,w\}$ and adds the edge $\{v,w\}$. 
    \end{itemize}
    \label{def:immersion}
\end{definition}
\begin{remark}
    In the case where $\{v,w\}$ was already present, $v$ and $w$ will now be connected by more than one edge. Hence this operation is intrinsically a multi-graph operation.
\end{remark}

By changing the edge-disjoint paths to vertex-disjoint paths, we obtain another important notion of embedding in graphs. 
\begin{definition}[embedding in ordinary graphs~\cite{graph-theor-appl}]
    Let $G, H$ be loopless graphs. An \emph{embedding}, or topological minor, of $H$ in $G$ is a function $\beta$ with domain $V(H)\cup E(H)$ such that: 
    \begin{enumerate}
        \item $\beta(v) \in V(G)$ for all $v\in V(H)$, and $\beta(u)\ne\beta(v)$ for all distinct $u,v\in V(H)$;
        \item for each edge $e\in E(H)$, if $e$ has distinct ends $u,v$, then $\beta(e)$ is a path of $G$ with ends $\beta(u), \beta(v)$; 
        \item for all distinct $e_1,e_2\in E(H)$, $V(\beta(e_1)\cap \beta(e_2))=\beta(V(e_1 \cap e_2))$, where $V(e_1 \cap e_2)$ denotes the common endpoints of $e_1$ and $e_2$.
    \end{enumerate}
    \label{def:embedding}
\end{definition}
Similarly, graph embedding can also be equivalently defined through the \textit{subdivision operation} on graphs. 
\begin{definition}
    An embedding of $H$ in $G$ exists if and only if (a subgraph) of $G$ can be obtained from $H$ by \textit{subdivision}, an operation on edges:
    \begin{itemize}
        \item Given an edge $\{u,v\}$, where $u,v$ are vertices in the graph, the subdivision of $\{u,v\}$ is the operation that adds a new vertex $w$ and replaces the original edge $\{u,v\}$ by two new edges $\{u,w\}$ and $\{w,v\}$.
    \end{itemize}
    \label{def:embedding-subdivision}
\end{definition}

\begin{remark}
    From Definitions~\ref{def:immersion} and~\ref{def:embedding}, we can immediately see that an embedding is an immersion, but not vice versa.
\end{remark}
As an important breakthrough in the algorithmic study of graph minors, it has been shown that the running time to check the existence of an embedding of a graph $H$ in a graph $G$ is fixed-parameter tractable with $H$ as the parameter. 
This means the algorithm's time complexity, for varying $G$, is $O(\left|V(G)\right|^{O(1)})$, where the $O(1)$ term does not increase with the size of $H$ asymptotically. Specifically, we have:
\begin{theorem}[time complexity of embedding in ordinary graphs~\cite{embed-fix-param-tract_gkmw11}]
    For every graph $H$, there is an  $O(\left|V(G)\right|^3)$ time algorithm that decides if $H$ has an embedding in a graph $G$.
    \label{the:embedding-time}
\end{theorem}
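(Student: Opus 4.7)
The plan is to reduce the embedding problem to the classical \emph{$k$-disjoint paths problem}: given a graph $G$ together with $k$ pairs of terminals $(s_i,t_i)$, decide whether there exist pairwise internally vertex-disjoint paths joining each $s_i$ to $t_i$. By Definition~\ref{def:embedding}, an embedding of $H$ in $G$ is precisely an injective map $\beta_0 : V(H) \to V(G)$ together with, for each edge $\{u,v\} \in E(H)$, a path in $G$ from $\beta_0(u)$ to $\beta_0(v)$, with all these paths pairwise internally vertex-disjoint. Since $H$ is fixed, both $|V(H)|$ and $|E(H)|$ are constants.

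A first-pass algorithm enumerates every injective $\beta_0$, of which there are at most $|V(G)|^{|V(H)|}$, and for each invokes the Robertson--Seymour disjoint paths subroutine with $k = |E(H)|$, which runs in $O(|V(G)|^3)$ time. This already yields a polynomial-time algorithm of overall cost $O(|V(G)|^{|V(H)|+3})$, but the exponent depends on $|H|$, so it is not FPT in the required sense. To strengthen the bound to $O(|V(G)|^3)$ with an exponent independent of $|H|$, I would adopt the win/win dichotomy that underpins topological subgraph containment. If $G$ has treewidth bounded by some function of $|H|$, the embedding problem is decidable in linear time via dynamic programming on a tree decomposition (or via Courcelle's theorem, since for fixed $H$ the existence of an embedding is expressible in monadic second-order logic). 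Otherwise, by the excluded grid theorem, $G$ contains a large wall as a subdivision; invoking the \emph{irrelevant vertex technique}, one identifies a vertex deep inside this wall whose deletion preserves the answer, removes it, and recurses.

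The main obstacle is establishing the irrelevant vertex lemma: proving that when $G$ contains a sufficiently large wall, some central vertex of that wall can be deleted without changing whether $H$ embeds in $G$. This requires the structural theorem describing the local neighbourhood of a large grid via tangles and a near-embedding into a surface, and is technically the deepest step, directly inherited from the Robertson--Seymour machinery. Once this lemma is in hand, amortising the cost of finding a wall and an irrelevant vertex against the $O(|V(G)|)$ deletion steps, and combining with the bounded-treewidth solver as the base case, yields the claimed $O(|V(G)|^3)$ running time with constants depending only on $H$.
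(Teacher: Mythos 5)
You should first note that the paper does not actually prove this statement: Theorem~\ref{the:embedding-time} is imported as a black box from Grohe, Kawarabayashi, Marx and Wollan \cite{embed-fix-param-tract_gkmw11}, and the surrounding argument (Lemmas~\ref{lemma_1_2} and~\ref{lemma_2_2}) only invokes it. So there is no internal proof to compare against; what you have written is an attempt to re-derive the main theorem of \cite{embed-fix-param-tract_gkmw11} itself.

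As such a re-derivation, your sketch has the right skeleton (reduce the rooted version to disjoint paths, then a win/win on treewidth with an irrelevant-vertex recursion), but the decisive step is asserted rather than proved, and the assertion is not correct as stated: the irrelevant-vertex lemma for topological containment is \emph{not} ``directly inherited from the Robertson--Seymour machinery.'' Robertson and Seymour's irrelevant-vertex arguments apply to the disjoint paths problem, where the terminals are prescribed in advance. In the embedding problem the images of the branch vertices are unknown, and a vertex deep inside a large flat wall can still be the unique viable image of a high-degree branch vertex of $H$ (its deletion then changes the answer), so ``some central vertex of the wall is irrelevant'' fails without substantial additional work. Controlling the high-degree vertices of $G$, locating candidate branch-vertex images, and combining this with the structure theorem is precisely the new content of \cite{embed-fix-param-tract_gkmw11}, and is why the problem remained open from Downey and Fellows (1992) until 2011 --- a point the paper you are reading makes explicitly. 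Separately, your claimed $O(\left|V(G)\right|^3)$ bound rests on an unproven amortisation ($O(\left|V(G)\right|)$ deletion rounds at $O(\left|V(G)\right|^2)$ each), and your base case must handle the rooted/labelled version of embedding on bounded treewidth, which Courcelle's theorem does give you but which you should state. In short: the reduction to rooted disjoint paths and the overall dichotomy are fine, but the heart of the theorem is missing.
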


\subsection{Hypergraph}
A \emph{hypergraph} relaxes the constraints on the number of vertices in each edge, where a hyperedge $e$ can connect an arbitrary number of 
vertices, and we denote the size of a hyperedge by the number of vertices in this edge. An $r$-uniform hypergraph is then a hypergraph where each hyperedge has size $r$. A complete $r$-uniform hypergraph is then the one with all possible edges of size $r$, or the edge set consisting of all size-$r$ subsets of the vertex set, denoted by $K_{n}^r$ where $n$ denotes the size of the hypergraph, i.e.,~$n=\left|V(K_{n}^r)\right|$. 

The end-to-end connection in hypergraphs can be considered through the generalisation of a path in an ordinary graph to hypergraphs, known as the \emph{Berge path}~\cite{berge1973hyper}. Specifically, a Berge path of length $t$ is an alternating sequence of distinct $t+1$ vertices and distinct $t$ hyperedges of the hypergraph $G$, $v_1, e_1, v_2, e_2, v_3,\cdots, e_t, v_{t+1}$, such that $v_i, v_{i+1}\in e_i$, for $i=1,\cdots,t$. 
A Berge path is a natural generalisation of a path in an ordinary graph. A \emph{Berge cycle} is the same as a Berge path except that the last vertex $v_{t+1}$ is replaced by $v_1$, making the sequence cyclic.
Further, a \emph{connected subgraph} of a hypergraph is a set of vertices that are pairwise connected by some Berge paths. The size of the subgraph is the size of its vertex set.

\subsection{Factor graph}
A factor graph $F(G)$ of a hypergraph $G$ is a ``larger'' (but ordinary) 
graph that still maintains the topology of $G$ to some extent. The factor graph is defined as follows~\cite{hypergr-percolation_bd23}:
Each vertex $\nu\in V(G)$ corresponds to a vertex $\nu'\in V(F(G))$, and
each hyperedge $\mu\in E(G)$ also corresponds to a vertex $\mu'\in V(F(G))$. In other words, $\left|V(F(G))\right|=\left|V(G)\right|+\left|E(G)\right|$. 
The topological connectivity in $G$ is maintained through edges in $F(G)$, where edges exist (and only exist) between the vertices corresponding to vertices in $G$ and those corresponding to hyperedges in $G$.
Specifically, 
a hyperedge $\mu$ is incident to a vertex $\nu$ in $G$ if and only if there is an edge between the corresponding vertices $\mu',\nu'\in V(F(G))$.

\section{Main results}\label{sec:main_results}
In this section, we first extend the immersion problem to hypergraphs, including the corresponding operations in hypergraphs. We then show that finding hypergraph immersion is fixed-parameter tractable and give a sketch of our proofs that are detailed in the following sections. 

\subsection{Hypergraph immersion}\label{sec:hypergraph}
We first generalise the notion of immersion to hypergraphs and propose the following \emph{hypergraph immersion}, where the key requirement of edge-disjoint paths is maintained but through connected subgraphs (or essentially Berge paths) in hypergraphs:

\begin{definition}
\label{definition_1}
Let $G, H$ be loopless hypergraphs\footnote{\footnotesize We define a loop in a hypergraph as a hyperedge containing a vertex more than once. By definition, size-$1$ hyperedges containing only a single vertex are permitted.}. An \emph{immersion} of $H$ in $G$ is a function $\alpha$ with domain $V(H)\cup E(H)$, such that:
\begin{enumerate}
    \item $\alpha(v)\in V(G)$ for all $v\in V(H)$, and $\alpha(v_1)\neq \alpha(v_2)$ for all distinct $v_1,v_2\in V(H)$; 
    \item for each hyperedge $e\in E(H)$, if $e$ has distinct ends $v_1,v_2,\cdots$, then $\alpha(e)$ is a connected subgraph in $G$ that includes $\alpha(v_1),\alpha(v_2),\cdots$. 
    \item for all distinct $e_1,e_2\in E(H)$, $E(\alpha(e_1)\cap \alpha(e_2))=\emptyset$;
\end{enumerate}
\end{definition} 

In other words, there is an injective mapping from vertices in $H$ to vertices in $G$ and from hyperedges in $H$ to \textit{edge-disjoint} connected subgraphs in $G$. 
As discussed in section~\ref{sec:preliminary}, immersion in ordinary graphs can be equivalently defined through the lifting operation. 
We now give the alternative definition of immersion through hypergraph operations in Definition~\ref{definition_2}. As we will see in Proposition~\ref{pro:definition_equiv}, it is necessary to decompose the lifting operation for ordinary graphs into two operations for hypergraphs, the \textit{coalescence} and \textit{dewetting} operations:
\begin{definition}
\label{definition_2}
    We define two operations (Fig.~\ref{fig_operation}):
    \begin{itemize}
        \item \emph{(Edge) Coalesce}: Merge two hyperedges that share at least one vertex, resulting in a new hyperedge that is incident to all vertices originally incident to the two hyperedges. Specifically, given two hyperedges $e_1, e_2$ with $e_1\cap e_2 \ne \emptyset$, the (edge) coalescence of $e_1, e_2$ is the operation that deletes the two edges $e_1, e_2$ and add the edge $e_1\cup e_2$. 
        \item \emph{Dewet}: Detach a hyperedge from one vertex that the hyperedge is incident on. Specifically, given a hyperedge $e$ and a vertex $v\in e$, the dewetting of $e$ from $v$ is the operation that deletes $e$ and adds the edge $e\backslash\{v\}$.
\end{itemize}
\end{definition}
\begin{remark}
    It is straightforward to see that the lifting operation in ordinary graphs can be obtained from the two operations in Definition~\ref{definition_2}, by first applying the edge coalescence to two adjacent edges and then dewetting the newly formed edge from the shared vertex. 
    The nature of hypergraphs adds to the complexity of the immersion problem, which makes it necessary to decompose the lifting operation into two operations.  
\end{remark}

\begin{proposition}[Theorem~1 in Ref.~\cite{tian2024quanimmersion}]
    An immersion $\alpha$ of a hypergraph $H$ in a hypergraph $G$, as defined in Definition~\ref{definition_1}, exists if and only if $H$ can be obtained from a subgraph of $G$ by a sequence of coalescence and dewetting operations. 
    \label{pro:definition_equiv}
\end{proposition}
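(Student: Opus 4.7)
The plan is to establish the two directions separately: for $(\Rightarrow)$ I would explicitly build a sequence of coalescence and dewetting operations from a given immersion $\alpha$, and for $(\Leftarrow)$ I would track the ``provenance'' of each surviving hyperedge through an arbitrary operation sequence and read the immersion map off at the end.

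\emph{Forward direction.} Starting from the subgraph $G' \subseteq G$ whose vertex and hyperedge sets are $\bigcup_{e \in E(H)} V(\alpha(e))$ and $\bigcup_{e \in E(H)} E(\alpha(e))$, I would process each hyperedge $e = \{v_1,\ldots,v_k\} \in E(H)$ independently. The edge-disjointness clause of Definition~\ref{definition_1} ensures that $\{E(\alpha(e))\}_{e \in E(H)}$ partitions $E(G')$, so operations applied within different blocks cannot interfere. Within one block, connectivity of $\alpha(e)$ (via Berge paths) yields an ordering $e^{(1)}, \ldots, e^{(m)}$ of its hyperedges such that each $e^{(j)}$ with $j \ge 2$ shares a vertex with the union of the earlier ones, enabling iterative coalescence into a single hyperedge $\hat{e}$ incident to every vertex of $\alpha(e)$. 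Subsequent dewetting of $\hat{e}$ from each vertex outside $\{\alpha(v_1),\ldots,\alpha(v_k)\}$ produces the hyperedge $\{\alpha(v_1),\ldots,\alpha(v_k)\}$, which is identified with $e$ via $\alpha^{-1}$. Running these routines for every $e \in E(H)$ recovers $H$ up to isolated auxiliary vertices, which are harmless.

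\emph{Reverse direction.} Given any sequence of operations taking $G' \subseteq G$ to $H$, I would maintain a provenance map $P$ that assigns to each current hyperedge $f$ a sub-hypergraph $P(f) \subseteq G'$, with $P(f) := \{f\}$ initially. The update rules are $P(e_1 \cup e_2) := P(e_1) \cup P(e_2)$ under coalescence and $P(e \setminus \{v\}) := P(e)$ under dewetting. I would then set $\alpha$ to act as the identity on the relevant vertices (no operation modifies vertices) and by $\alpha(e) := P(e)$ for each terminal hyperedge $e \in E(H)$. Verifying Definition~\ref{definition_1} reduces to three inductive invariants: (i) $P(f)$ is a connected sub-hypergraph containing every current endpoint of $f$; (ii) distinct current hyperedges have disjoint $P$-images in $E(G')$; (iii) no operation identifies or merges vertices.

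The key obstacle will be invariant~(i) under coalescence, where one must show that $P(e_1) \cup P(e_2)$ is still connected. This will follow from the precondition $e_1 \cap e_2 \ne \emptyset$: any shared current endpoint of $e_1$ and $e_2$ lies in both $P(e_1)$ and $P(e_2)$ by the inductive hypothesis, furnishing a Berge-path bridge between the two subgraphs. Invariant~(ii) is almost immediate because each coalescence simultaneously destroys both of its inputs, so their provenances are absorbed exactly once, and dewetting alters only a single hyperedge. Invariant~(iii) is built into the operation definitions. With all three invariants in hand, $\alpha$ as read off at termination satisfies Definition~\ref{definition_1}, completing the equivalence.
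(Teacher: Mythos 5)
Your proposal is correct and follows essentially the same route as the paper's proof: the forward direction coalesces each connected subgraph $\alpha(e)$ block-by-block into a single hyperedge and then dewets the non-terminal vertices, and the reverse direction recovers $\alpha$ by tracking which original hyperedges each surviving hyperedge was built from. Your provenance map $P$ with its three invariants is simply a more explicit formalization of what the paper states informally as ``retrieving the set of edges where we apply the operations,'' and your connectivity argument via the shared endpoint in $e_1 \cap e_2$ makes rigorous a step the paper leaves as an assertion.
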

\begin{proof}
(Also see Ref.~\cite{tian2024quanimmersion})
If an immersion $\alpha$ exists, we can show that $H$ is isomorphic to a hypergraph $G^{\text{cl;dw}}$ obtained by applying the coalescence and dewetting operations on a subgraph of $G$, where the bijection corresponding to the isomorphism $f: V(H)\to V(H')$ takes the same value as $\alpha$, i.e.,~$\forall v\in V(H)$, $f(v) = \alpha(v)$. 
    
    We build $G^{\text{cl;dw}}$ as follows. 
    For each $e\in E(H)$ with distinct ends $v_1, v_2, \dots, v_{\left|e\right|}$, we apply the coalescence operation to every adjacent pair of hyperedges in the corresponding connected subgraph $\alpha(e)\subseteq G$, until we obtain one single hyperedge incident to all vertices in $\{\alpha(v_1), \alpha(v_2), \dots, \alpha(v_{\left|e\right|})\}$. The coalescence operation is possible since $\alpha(e)$ is connected.
    Then, we apply the dewetting operation to detach the single hyperedge from vertex $u$, $\forall u\in V(\alpha(e))\backslash \{\alpha(v_1), \alpha(v_2), \dots, \alpha(v_{\left|e\right|})\}$. Doing this for each edge-disjoint subgraph $\alpha(e)$, $\forall e\in E(H)$, we denote the resulted hypergraph, a collection of single hyperedges derived from every $\alpha(e)$, by $G^{\text{cl;dw}}$.
    It is straightforward to show that $H$ is isomorphic to $G^{\text{cl;dw}}$ with the bijection $f$. 
    
    Now, if $H$ can be obtained by applying the two operations to a subgraph of $G$, we can immediately construct $\alpha$ which maps $v\in V(H)$ to $\alpha(v)\in V(G)$, the one it comes from. Then $a(v_1)\ne \alpha(v_2)$ for all distinct $v_1,v_2\in V(H)$. Also, for each edge $e\in E(H)$, we can retrieve the set of edges $\alpha(e)\in E(G)$ where we apply the operations to obtain $e$. Then $\alpha(e)$ is connected since the two operations can only be applied to connected edges. Also, for all distinct $e_1, e_2\in E(H)$, $E(\alpha(e_1)\cap\alpha(e_2)) = \emptyset$, since each edge in $G$ can be used at most once in the operations. Hence, $H$ is immersed in $G$ by definition. 
\end{proof}

\begin{figure}[h!]
    \centering
    \begin{minipage}[b]{120pt}
        \begin{minipage}[b]{120pt}
		\centering
		{\includegraphics[width=120pt]{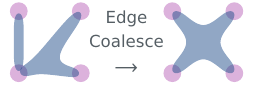}\subcaption{\label{fig_coalesce}}}
	\end{minipage} \\[20pt] 
 
    \begin{minipage}[b]{120pt}
		\centering
		{\includegraphics[width=120pt]{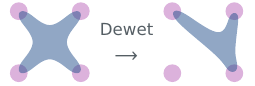}\subcaption{\label{fig_evaporate}}}
	\end{minipage}
    \end{minipage}\hspace{2mm}
    \begin{minipage}[b]{98pt}
            \centering
            {\includegraphics[width=98pt]{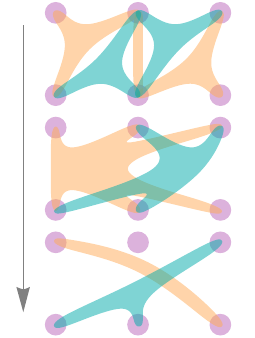}\subcaption{\label{fig_coa_eva}}}
    \end{minipage}
    \vspace{-3mm}
    \caption{\textbf{Hypergraph immersion.} (a) The ``edge coalescence'' operation involves merging two connected hyperedges, resulting in a new hyperedge that encompasses all vertices from the original pair. (b) The ``dewetting'' entails the removal of a single vertex from a hyperedge, effectively breaking its connection with that vertex. 
    (c) Transformation of QN from an initial topology to a final one, achieved through a series of operations. The final topology is effectively immersed within the original.\vspace{-2mm}
    \hfill\hfill}
    \label{fig_operation}
\end{figure}

\subsection{Fixed-parameter tractable}
We prove the existence of a polynomial algorithm $\sim O(\left|V(G)\right|^6)$, where $V(G)$ is the number of vertices in $G$, to decide whether $H$ can be immersed in $G$. Notably, the exponent $6$ we discover is a constant, independent of $H$. This makes the algorithm \emph{fixed-parameter tractable} for fully fixed $H$. 
Note that the same question of whether immersion in an ordinary graph is fixed-parameter tractable was first raised in 1992 by Downey and Fellows~\cite{fix-param-tract_df92}, but only answered with positivity after two decades~\cite{embed-fix-param-tract_gkmw11}. Our result, as a generalisation of Ref.~\cite{embed-fix-param-tract_gkmw11}, now extends both the question and answer to the hypergraph realm.

\begin{theorem}
\label{theorem_main}
	For every (ordinary or hyper) graph $H$, there is an $O(\left|V(G)\right|^6)$ time algorithm that decides if $H$ can be immersed in a hypergraph $G$.
 \label{the:fixed-params}
\end{theorem}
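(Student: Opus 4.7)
The plan is to reduce the hypergraph immersion problem to an ordinary-graph problem so a cubic-time algorithm akin to Theorem~\ref{the:embedding-time} can be invoked as a subroutine, and to carry this out in the two steps laid out in Sections~\ref{sec:proof_ordinary} and~\ref{sec:proof_hyper}.

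First I would handle the sub-case where $H$ is an ordinary graph while $G$ remains a hypergraph, by passing to the factor graph $F(G)$ of Section~\ref{sec:preliminary}. Every Berge path in $G$ lifts to an alternating path in the bipartite ordinary graph $F(G)$, and the edge-disjointness condition of Definition~\ref{definition_1} translates into the requirement that each ``edge-side'' vertex of $F(G)$ (the vertex $\mu'$ corresponding to a hyperedge $\mu\in E(G)$) be used by at most one of the mapped paths, while each ``vertex-side'' vertex may be transited freely. To cast this mixed disjointness as a standard internal-vertex-disjoint embedding constraint, I would augment $F(G)$ with local gadgets---a bottleneck widget at each edge-side vertex to enforce single-use, and a clique-like widget at each vertex-side vertex to permit multiple transits---producing an ordinary graph $G^{\star}$ of size polynomial in $|V(G)|$, together with a subdivided ordinary graph $H^{\star}$ matching the bipartite alternation. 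One then shows that $H$ is immersed in $G$ iff $H^{\star}$ is embedded in $G^{\star}$, so Theorem~\ref{the:embedding-time} decides the latter in $O(|V(G^{\star})|^3)$ time. Bounding $|V(G^{\star})|=O(|V(G)|^2)$ yields the claimed $O(|V(G)|^6)$ runtime for this sub-case.

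Next I would extend to a general hypergraph $H$, where each hyperedge $e\in E(H)$ must be mapped to a connected sub-hypergraph $\alpha(e)$ spanning $\alpha(V(e))$, rather than to a single Berge path. Rather than searching all such $\alpha(e)$ directly, I would observe that any minimal connected sub-hypergraph spanning $k=|e|$ specified vertices admits a ``Steiner skeleton''---a tree-like arrangement of hyperedges---whose combinatorial shape lies in only finitely many \emph{topological classes}, determined by $k$ alone and hence only by the fixed $H$. I would enumerate over every assignment of a topological class to each hyperedge of $H$; each assignment replaces $H$ by an ordinary auxiliary graph $\tilde H$ whose immersion in $G$ can be decided by the Step~1 algorithm. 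Since the total number of such assignments is bounded by a constant $c(H)$ depending only on $H$, the overall running time stays at $O(|V(G)|^6)$.

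The hard part will be the second step: defining topological classes so that (i) their number is bounded solely in terms of $H$, and (ii) the reduction is both complete---every valid hypergraph immersion $\alpha$ can be pruned to a Steiner skeleton realising one of the enumerated classes---and sound---every ordinary-graph immersion of the resulting $\tilde H$ lifts back to a hypergraph immersion of $H$. Correctness will rely on pruning each $\alpha(e)$ without destroying the edge-disjointness between distinct $\alpha(e_1)$ and $\alpha(e_2)$, a point that must be verified with care given the richer overlap structure of hyperedges compared to ordinary edges.
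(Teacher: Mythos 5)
Your overall strategy is essentially the paper's: convert $G$ to its factor graph, use local widgets (the paper uses $M$ duplicate copies of each vertex of $G$ to permit multiple transits, plus a $K_L$-clique attached to each terminal vertex to pin down the vertex-to-vertex mapping) to turn hypergraph immersion into ordinary-graph embedding decidable via Theorem~\ref{the:embedding-time}, and then handle hyperedges of $H$ by enumerating finitely many Steiner-tree topological classes (the paper's ``divisions'' $\mathcal{D}(H)$). So the architecture is right, and the difficulty you flag in Step~2 (completeness and soundness of the enumeration) is exactly what the paper's Lemmas~\ref{lemma_division_set}--\ref{lemma_2_2} are for.

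There is, however, one genuine gap: you assert $\abs{V(G^{\star})}=O(\abs{V(G)}^2)$ with no justification, and this does not follow from the construction alone. The factor graph has $\abs{V(G)}+\abs{E(G)}$ vertices, and a hypergraph on $N$ vertices can have far more than polynomially many hyperedges (with multi-hyperedges, arbitrarily many), so without an a priori bound on $\abs{E(G)}$ your reduction target is not polynomial in $\abs{V(G)}$ and the claimed $O(\abs{V(G)}^6)$ runtime collapses. The paper closes this with a hypergraph version of Mader's theorem (Lemma~\ref{lemma_1_3}): if $\abs{E(G)}>C(H)\abs{V(G)}$ for a suitable constant depending only on $H$, the algorithm can immediately answer ``yes'' because the density already forces an immersion of $H$; otherwise $\abs{E(G)}=O(\abs{V(G)})$, which combined with the clique size $L=O(\abs{V(G_M')})$ gives $\abs{V(G_{M,L}'')}=O(\abs{V(G)}^2)$ and hence the exponent $6$. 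You need this (or an equivalent preprocessing step) for the size bound to hold. A secondary, smaller issue: fixing a particular subdivision $H^{\star}$ of $H$ ``matching the bipartite alternation'' is unnecessary and potentially unsound, since a specific subdivision may fail to embed even when $H$ does; the embedding definition already allows edges of $H$ to map to paths of any length, and the real work is forcing terminals of $H$ to land only on designated vertex-side copies, which is what the clique densification accomplishes.
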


Before giving the detailed proof in the following sections, we sketch the key idea of the proof. Essentially, we hope to transform the hypergraphs $H$ and $G$ into their respective factor graphs, $F(H)$ and $F(G)$, which are ordinary graphs. This seemingly reduces the hypergraph immersion problem to a standard graph immersion problem. However, two key challenges arise:

\begin{enumerate}
    \item In the transformation to factor graphs, both hyperedges and vertices of $H$ become vertices in $F(H)$, and similarly for $G$ and $F(G)$. Consequently, a vertex-to-vertex mapping between $F(H)$ and $F(G)$ does not guarantee a corresponding mapping between $H$ and $G$, potentially violating criterion~(1) of Definition~\ref{definition_1}.
    \item Criterion~(2) of Definition~\ref{definition_1} requires mapping each hyperedge $e \in E(H)$ to a connected subgraph $\alpha(e) \subseteq G$ without specifying $\alpha(e)$'s topology. However, it is worth noting that converting to $F(H)$ imposes a star-graph topology for each $e \in E(H)$. Thus, even if $e$ can be immersed in $G$ with some (potentially non-star) topology, the corresponding star graph in $F(H)$ might not be immersible in $F(G)$ if $F(G)$ lacks the required star topology.
\end{enumerate}

We will address these challenges as follows:
\begin{enumerate}
    \item We adopt the technique in Ref.~\cite{embed-fix-param-tract_gkmw11}, replacing each vertex in both $H$ and $G$ with a sufficiently large complete graph. This ``densification'' ensures that a complete graph in $H$ can only be immersed in a corresponding complete graph in $G$. With sufficiently large complete graphs, this establishes a rigorous vertex-to-vertex mapping between $H$ and $G$. This technique was originally used to convert ordinary graph immersion to an embedding problem while preserving vertex mapping~\cite{embed-fix-param-tract_gkmw11}. As we demonstrate, the technique is also applicable to hypergraphs, enabling the conversion of hypergraph immersion to an ordinary graph embedding problem (using factor graphs), and leveraging existing results on graph embedding for our proof.
    \item Rather than considering the factor graph of a single hypergraph $H$, we examine the factor graphs of a set of hypergraphs $\Tilde{H}$ as variants of $H$, each we call a \emph{division} of $H$. A division $\Tilde{H}$ is constructed by replacing each hyperedge $e\in E(H)$ with a connected subgraph whose factor graph has a possibly different, non-star topology (which is essentially a spanning tree) than the standard star topology of the factor graph of $e$. Since the number of spanning trees connecting a finite set of terminals is finite, we can build a finite number of divisions $\Tilde{H}$ encompassing all possible topologies for the immersion of each $e \in E(H)$. Testing the factor graphs of these divisions provides a criterion for determining whether $G$ contains a topology capable of immersing one of the divisions, thus completing the proof.
\end{enumerate}

\section{Proof for ordinary graph $H$}\label{sec:proof_ordinary}
For simplicity, we first consider the case when $H$ is a simple graph:
\begin{theorem}
\label{theorem_1}
	For every ordinary, simple graph $H$, there is an $O(\left|V(G)\right|^6)$ time algorithm that decides if $H$ can be immersed in a hypergraph $G$.
\end{theorem}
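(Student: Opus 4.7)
The plan is to follow the sketch given in Section~\ref{sec:main_results}: reduce the hypergraph immersion problem to the ordinary graph embedding problem, so that Theorem~\ref{the:embedding-time} can be invoked as a black box. The reduction proceeds in two stages. First, I would apply the densification technique of Ref.~\cite{embed-fix-param-tract_gkmw11}, replacing each vertex of both $H$ and $G$ with a complete graph on $N^*$ vertices, where $N^*$ is a constant depending only on $H$, chosen large enough that any immersion of a $K_{N^*}$ must use an actual $K_{N^*}$-subgraph on the host side. Each edge of $H$ and each hyperedge of $G$ is then anchored at a designated vertex of each corresponding clique, yielding new (hyper)graphs $H^*$ and $G^*$. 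A lemma to be established states that $H$ is immersed in $G$ if and only if $H^*$ is immersed in $G^*$; this is the step that enforces the genuine vertex-to-vertex correspondence required by criterion~(1) of Definition~\ref{definition_1}.

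Second, I would pass to the factor graphs $F(H^*)$ and $F(G^*)$, both of which are ordinary bipartite graphs. Because $H$ (and hence $H^*$) is an ordinary simple graph, each edge of $H^*$ becomes a path of length two through a single new factor vertex in $F(H^*)$, so the ``division'' complication sketched for the fully hypergraph setting does not arise here. The main technical equivalence to establish is that $H^*$ has a hypergraph immersion in $G^*$ if and only if $F(H^*)$ has an embedding (topological minor) in $F(G^*)$. Edge-disjoint Berge paths in $G^*$ correspond to walks in $F(G^*)$ whose internal hyperedge-factor vertices are each used at most once, and the densification can be arranged to promote these walks to fully internally vertex-disjoint paths. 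Given the equivalence, Theorem~\ref{the:embedding-time} supplies an $O(|V(F(G^*))|^3)$ embedding algorithm, and since $|V(F(G^*))| = O(|V(G^*)| + |E(G^*)|) = O(|V(G)|^2)$, the overall running time is $O(|V(G)|^6)$, as claimed.

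I expect the main obstacle to be the equivalence in the second stage, specifically the gap between edge-disjointness (required by immersion) and internal vertex-disjointness (required by embedding). The densification is designed precisely to close this gap: distinct edges incident to a common vertex $v$ of $H$ must, in any immersion of $H^*$ into $G^*$, leave the clique corresponding to $\alpha(v)$ through distinct anchor vertices, so that the corresponding paths in the factor graph can be re-routed to be internally vertex-disjoint. Pinning down the correct value of $N^*$, specifying the anchoring scheme, and pushing the argument cleanly through both directions of the equivalence will be the technical heart of the proof; the remaining steps (the vertex-correspondence lemma and the complexity calculation) are then essentially bookkeeping on top of Theorem~\ref{the:embedding-time}.
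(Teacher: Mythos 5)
Your overall strategy---densify with clique gadgets, pass to factor graphs, and invoke Theorem~\ref{the:embedding-time} on the resulting topological-minor instance---is the same as the paper's, but two quantitative steps fail precisely because the host $G$ is a hypergraph. First, the clique size $N^*$ cannot be a constant depending only on $H$. The clique gadgets work by a degree argument: a vertex of degree at least $L-1$ in the densified pattern must be mapped to a vertex of degree at least $L-1$ in the densified host, and this forces branch vertices onto the designated copies only if every \emph{non-gadget} vertex of the host factor graph has degree below $L-1$. For an ordinary host this degree is bounded and a constant clique suffices (as in Grohe et al.), but a hyperedge of $G$ of size $r$ becomes a factor vertex of degree $\Theta(r)$ (or $\Theta(Mr)$ after duplication), with $r$ as large as $|V(G)|$; nothing prevents a constant-size subdivided clique from embedding into the neighbourhood of such a vertex. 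The paper therefore takes $L>M|V(G_{M}')|=O(|V(G)|)$ (see the remark after Lemma~\ref{lemma_1_2}), and this linear growth of $L$ is exactly what produces $|V(G_{M,L}'')|=O(|V(G)|^2)$ and hence the exponent $6$.

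Second, you assert $|V(F(G^*))|=O(|V(G^*)|+|E(G^*)|)=O(|V(G)|^2)$ with no bound on $|E(G)|$. Unlike a simple ordinary graph, a hypergraph on $N$ vertices can have $2^N$ distinct hyperedges (and arbitrarily many counting multiplicities), so the factor graph need not be of polynomial size in $|V(G)|$ at all; the paper closes this with Lemma~\ref{lemma_1_3}, a hypergraph version of Mader's theorem, which lets the algorithm answer YES outright when $|E(G)|\geq C(H)|V(G)|$ and otherwise guarantees $|E(G)|=O(|V(G)|)$. A further underspecified point is the conversion of edge-disjointness to vertex-disjointness at vertices of $G$ that lie in the interior of several Berge paths without being images of $V(H)$: if each hyperedge incident to $\nu$ is ``anchored at a designated vertex'' of the clique at $\nu$, two edge-disjoint paths through $\nu$ can be forced through the same anchor. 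The paper's $M$-generalised factor graph instead gives every vertex of $G$ a set of $M>|E(H)|$ interchangeable copies, each adjacent to \emph{all} incident hyperedge-vertices, which is what makes the rerouting in Lemma~\ref{lemma_1_1} go through. With these three repairs your outline coincides with the paper's proof.
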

The proof structure of Theorem~\ref{theorem_1} generalises the discussion on immersion for ordinary, simple graphs by Grohe~et al.~\cite{embed-fix-param-tract_gkmw11}.
Specifically, we will represent each hyperedge in hypergraph $G$ as a vertex in a new, ordinary graph, such that the original immersion problem in a hypergraph may be reformulated as an embedding problem in the corresponding ordinary graph, as in Definition~\ref{def:embedding}. We term this new graph as an \textit{$M$-generalised factor graph}.

\subsection{$M$-generalised factor graph}\label{sec:M-factor-G}
An $M$-generalised factor graph $F_M(G)$ 
generalises the factor graph $F(G)$. Here,
we denote the $M$-generalised factor graph of $G$ by $G_{M}'$.
each hyperedge $\mu\in E(G)$ corresponds to a vertex $\mu'\in V(G_{M}')$ and each vertex $\nu\in V(G)$ corresponds to $M$ duplicate vertices $\nu_1',\nu_2',\cdots,\nu_M'\in V(G_{M}')$, with the subscript $i$ in $\nu_i'$ to separate different copies for $i\in\{1,\dots, M\}$, $M\in\mathbb{N}_+$ and assumed to be a large number; see Fig.~\ref{fig_subdivision_G} for an example when $M=3$.  
Every $G_{M}'$ is an ordinary and simple graph of size $\left|V(G_{M}')\right|=M\left|V(G)\right|+\left|E(G)\right|$. 
Similar to the factor graph formalism, edges exist (and only exist) between the vertices corresponding to vertices in $G$ and those corresponding to hyperedges in $G$.
Specifically, 
a hyperedge $\mu$ is incident to a vertex $\nu$ in $G$ if and only if the corresponding vertex $\mu'\in V(G_{M}')$ is adjacent to all vertices $\nu_1',\nu_2',\cdots,\nu_M'\in V(G_{M}')$. 
In particular, when $M=1$, we retrieve the factor graph of $G$, i.e.,~$F_1(G)\equiv F(G)$.
\begin{figure}[t!]
	\centering
    \includegraphics[width=243pt]{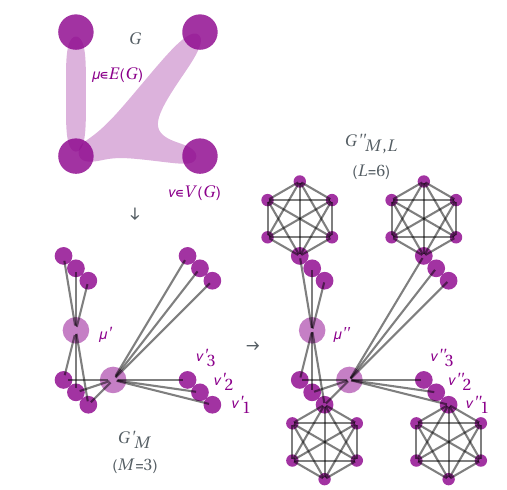}
    \caption{\textbf{Converting hypergraph $G$ to ordinary graphs.}  The ordinary graph $G_{M}'$ is the $M$-generalised factor graph of $G$, and $G_{M,L}''$ is the densified version of $G_{M}'$.
    \hfill\hfill}
    \label{fig_subdivision_G}
\end{figure}
\begin{lemma}
\label{lemma_1_1}
    If an ordinary, simple graph $H$ has an immersion $\alpha$ in a hypergraph $G$, then $H$ has an embedding $\beta$ in the $M$-generalised factor graph $G_{M}'$ given that $M$ is sufficiently large. 
\end{lemma}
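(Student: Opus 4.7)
The plan is to turn the immersion $\alpha$ of $H$ in $G$ into an embedding $\beta$ of $H$ in $G_{M}'$ by converting each connected image $\alpha(e)$ into a genuine path of $G_{M}'$ that routes through distinct ``copies'' at the $M$-fold duplicated vertices. First, for each edge $e=\{u,v\}\in E(H)$, since $\alpha(e)$ is a connected sub-hypergraph of $G$ containing $\alpha(u)$ and $\alpha(v)$, I can extract a Berge path
\[
P_e\;=\;\alpha(u),\;\mu_{e,1},\;w_{e,1},\;\mu_{e,2},\;w_{e,2},\;\ldots,\;\mu_{e,t_e},\;\alpha(v)
\]
inside $\alpha(e)$. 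By definition, Berge paths have pairwise distinct vertices, and the edge-disjointness requirement $E(\alpha(e_1)\cap\alpha(e_2))=\emptyset$ forces the hyperedge sequences $\{\mu_{e,j}\}_{j}$ to be pairwise disjoint across distinct $e\in E(H)$.

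Each Berge path $P_e$ lifts to a walk in $G_{M}'$ of the form $\alpha(u)_{i_0}',\,\mu_{e,1}',\,w_{e,1,\,i_1}',\,\mu_{e,2}',\,\ldots,\,\alpha(v)_{i_{t_e}}'$, where at every step one must pick which of the $M$ duplicates of the current vertex of $G$ to visit. I will define $\beta$ on vertices by fixing, for each $u\in V(H)$, a canonical copy $\beta(u):=\alpha(u)_1'$, and require every path $\beta(e)$ with $u\in e$ to enter/leave at $\beta(u)$. This ensures that paths corresponding to edges of $H$ meeting at $u$ share exactly the vertex $\beta(u)$, which is precisely what Definition~\ref{def:embedding} demands.

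The main technical step is to assign the indices $i_k$ at all intermediate vertices $w_{e,k}$ so that the resulting paths $\beta(e)$ are pairwise vertex-disjoint in $G_{M}'$ except at the forced endpoints. Since each individual Berge path traverses each vertex of $G$ at most once, a fixed $w\in V(G)$ receives at most $|E(H)|$ intermediate visits across all $P_e$, plus possibly one ``endpoint slot'' $\beta(u)$ if $w=\alpha(u)$. Hence taking $M\ge |E(H)|+1$ gives enough fresh copies to assign, greedily vertex-by-vertex of $G$, a distinct duplicate to every intermediate visit while keeping the endpoint slot $\beta(u)$ untouched. The hyperedge-vertices $\mu_{e,j}'$ are automatically distinct across different $e$ by the edge-disjointness of the Berge paths, and within a single path all vertices are distinct because the underlying $P_e$ has distinct vertices and the chosen copies are distinct.

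The step I expect to be the main obstacle is the joint routing bookkeeping: one must verify that a single consistent choice of copies simultaneously satisfies all pairwise disjointness conditions, including the constraint that intermediate visits to $w=\alpha(u)$ must avoid the reserved copy $\beta(u)$. This reduces to a purely combinatorial counting problem that the bound $M\ge|E(H)|+1$ resolves; a direct greedy assignment per vertex of $G$ then produces the indices. Once the routing is fixed, checking the three conditions of Definition~\ref{def:embedding} for $\beta$ is routine, completing the reduction from immersion in $G$ to embedding in $G_{M}'$.
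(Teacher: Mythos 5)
Your proposal is correct and follows essentially the same route as the paper's proof: map each $v\in V(H)$ to the canonical copy $\nu_1'$, use the edge-disjointness of the images $\alpha(e)$ to guarantee each hyperedge-vertex $\mu'$ of $G_{M}'$ is used at most once, and use $M$ on the order of $|E(H)|$ duplicates per vertex of $G$ to resolve collisions at intermediate vertices. Your version is somewhat more explicit than the paper's (extracting genuine Berge paths so that $\beta(e)$ is an actual path, and reserving the copy $\nu_1'$ at endpoints), but the underlying idea and counting are the same.
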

\begin{proof}
To show this, let $\beta$ match the same mapping of $\alpha$, i.e.,~if we denote $\alpha(v)$ for some $v\in V(H)$ as a vertex $\nu\in V(G)$, then 
w.l.o.g.~let $\beta(v)=\nu_1'\in V(G_{M}')$. 

To match the connected subgraphs, note that for all $e\in E(H)$, their corresponding connected subgraphs $\alpha(e)$ in $G$ are edge-disjoint. Therefore, each hyperedge $\mu\in E(G)$ is used at most once when forming the connected subgraphs. But this also means each corresponding vertex $\mu'$ in $G_{M}'$ will be used at most once when forming the connected subgraphs $\beta(e)$ in $G_{M}'$. The rest of vertices, $\nu_2',\cdots,\nu_M'\in V(G_{M}')$ will also be used at most once, provided that $M$ is large enough to accommodate as many as $\left|E(H)\right|$ different connected subgraphs that may pass through $\nu$. Hence, the connected subgraphs $\beta(e)$ in $G_{M}'$ are vertex disjoint; thus, $\beta$ is an embedding.
\end{proof}

However, the converse of Lemma~\ref{lemma_1_1} is not always true: a graph $H$ may not have an immersion in $G$ even if $H$ has an embedding $\beta$ in $G_{M}'$. This can occur when $\beta$ maps a vertex in $H$  to (i) some vertex $\mu'\in E(G) \subset V(G_{M}')$ that corresponds to a hyperedge $\mu \in E(G)$, or (ii) some vertex $\nu_i'\in 
V(G_{M}')$, $i\neq 1$, which conflicts with another differently mapped vertex $\nu_1' \in V(G_{M}')$, with both $\nu_1'$ and $\nu_i'$ ($i\neq 1$) corresponding to the same vertex $\nu \in V(G)$. 
To solve this mismatching, it is necessary to restrict every vertex $v\in V(H)$ to be mapped only to one of the vertex duplicates, w.l.o.g.~the first one: $\beta(v)=\nu_1'\in V(G_{M}')$, which always originates from a different vertex $\nu \in V(G)$ for different $v$. 
This restriction is crucial for establishing a sufficient and necessary condition to convert the immersion problem to an embedding problem, and we achieve it by \textit{densifying} the graph, as we introduce below.

\subsection{Densifying a graph}
\label{sec:proof1-densify}
We construct a new graph as follows. 
We replace each 
$\nu_1' \in V(G_{M}')$
by a large, ordinary, complete graph $K_L$ (i.e.,~a complete $2$-uniform $K_{n=L}^2$), and identify one of the $L$ vertices in $K_L$ as $\nu_1'$, in the sense that the connection between $\nu_1'$ and other vertices in $G_{M}'$ are transferred to the connections between this vertex and the corresponding vertices in the newly constructed graph; see Fig.~\ref{fig_subdivision_G} for an example when $L=6$.
This operation necessarily ``densifies'' $G_{M}'$, yielding a new graph $G_{M,L}''$ with 
$|V(G_{M,L}'')|=|V(G_{M}')|+(L-1)|V(G)|$; see Fig.~\ref{fig_subdivision_G} 
for an example. 
Similarly, we also densify $H$, by replacing each $v\in V(H)$ by a complete graph $K_L$ and identifying one of the $L$ vertices in $K_L$ as $v$. 
This yields a new graph $H_{L}''$ with 
$|V(H_{L}'')|=L|V(H)|$. 
Note that now every vertex in $H_{L}''$ has a degree no smaller than $L-1$, and so do the $|V(G)|$ vertices in $G_{M,L}''$ that have been ``densified'' in the construction. 

\begin{lemma}
\label{lemma_1_2}
Given sufficiently large $M$ and $L$, an ordinary, simple graph $H$ has an immersion $\alpha$ in a hypergraph $G$ if and only if
the densified graph $H_{L}''$ has an embedding $\beta$ in the densified, $M$-generalised factor graph $G_{M,L}''$.
\end{lemma}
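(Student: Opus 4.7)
For the ($\Rightarrow$) direction, I would lift the embedding supplied by Lemma~\ref{lemma_1_1}. Given an immersion $\alpha$ of $H$ in $G$, Lemma~\ref{lemma_1_1} produces, for $M$ sufficiently large, an embedding $\beta\colon H\hookrightarrow G_M'$ with $\beta(v)=\nu_1'$ whenever $\alpha(v)=\nu$. Extend $\beta$ to a map $\tilde\beta\colon H_L''\to G_{M,L}''$ by sending each densified clique $K_L^{(v)}$ of $H_L''$ bijectively onto the densified clique attached to $\nu_1'=\beta(v)$ (with the identified vertex $v$ going to $\nu_1'$), and by reusing the $\beta$-path for every original edge of $H$. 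The verification that $\tilde\beta$ is an embedding is essentially bookkeeping: branch vertices are distinct because $\alpha$ is injective on $V(H)$ and distinct $\nu$'s give disjoint densified cliques, while the paths for original edges of $H$ stay inside the ``external'' part of $G_{M,L}''$ (the hyperedge vertices $\mu'$ and higher-index duplicates $\nu_i'$ with $i\geq 2$), so they meet neither each other nor any densified clique internally.

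\textbf{Backward direction and the main obstacle.} For the ($\Leftarrow$) direction, the heart of the argument is the ``block'' claim that any embedding $\tilde\beta\colon H_L''\hookrightarrow G_{M,L}''$ sends every densified clique $K_L^{(v)}$ of $H_L''$ into a single densified clique $K_L^{(\nu)}$ of $G_{M,L}''$. I would prove this in two stages. First, every vertex of $K_L^{(v)}$ has degree at least $L-1$ in $H_L''$, so its image needs degree at least $L-1$ in $G_{M,L}''$; choosing $L$ to exceed both the maximum hyper-degree $\Delta_G$ of $G$ and the quantity $M\cdot r_G$ (where $r_G$ is the maximum hyperedge size) rules out every non-densified vertex as a branch-vertex image, since higher-index duplicates have degree $\deg_G(\nu)\leq \Delta_G$ and hyperedge vertices $\mu'$ have degree $M|\mu|\leq M r_G$. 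Second, two distinct densified cliques of $G_{M,L}''$ are joined only through the external edges of their identified vertices $\nu_1'$, which form a cut of size at most $\Delta_G$; if the $L$ branch vertices of $K_L^{(v)}$ were to split as $s$ in one densified clique and $L-s$ in another with $1\leq s\leq L-1$, the embedding would require $s(L-s)\geq L-1$ internally vertex-disjoint paths through a cut of size at most $\Delta_G$, which is impossible once $L-1>\Delta_G$. Iterating over all clusters forces all $L$ branch vertices into a single densified clique. This cut-counting step is the main obstacle; everything else is bookkeeping in the factor-graph dictionary.

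\textbf{From block structure to immersion.} Once the block claim is established, define $\alpha(v)=\nu$ when $\tilde\beta(K_L^{(v)})=V(K_L^{(\nu)})$. The identified vertex $v$ is forced to land on $\nu_1'$, because $v$ is the only vertex of its block carrying more than $L-1$ edges in $H_L''$ (it has the $\deg_H(v)$ extra edges of $H$), and $\nu_1'$ is the only vertex of the image block carrying more than $L-1$ edges in $G_{M,L}''$. Injectivity of $\alpha$ on $V(H)$ follows from injectivity of $\tilde\beta$ on branch vertices. For every edge $e=uv\in E(H)$, the path in $G_{M,L}''$ from $\tilde\beta(u)$ to $\tilde\beta(v)$ traverses the external part as an alternating sequence of vertex duplicates and hyperedge vertices; reading this sequence back in $G$ (identifying each $\mu'$ with $\mu\in E(G)$ and each $\nu_i'$ with $\nu\in V(G)$) yields a connected Berge-path-like subgraph of $G$ containing $\alpha(u)$ and $\alpha(v)$, which I take as $\alpha(e)$. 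Internal vertex-disjointness of the paths in $G_{M,L}''$ translates immediately into edge-disjointness of the subgraphs $\alpha(e)$, since each hyperedge $\mu\in E(G)$ is represented by a single vertex $\mu'\in V(G_{M,L}'')$ and can therefore be used internally by at most one path. Thus $\alpha$ is an immersion of $H$ in $G$, completing the equivalence.
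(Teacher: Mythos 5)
Your proof is correct and follows essentially the same route as the paper's: the forward direction lifts the embedding of Lemma~\ref{lemma_1_1} clique-by-clique, and the backward direction uses the degree bound $\deg(v'')>L-1$ to force identified vertices of $H_L''$ onto identified vertices $\nu_1''$ of $G_{M,L}''$ before translating the embedding back into an immersion via the factor-graph correspondence. Your explicit cut-counting ``block claim'' is just a more detailed justification of the step the paper asserts more briefly (that the embedded paths for original edges of $H$ cannot pass through the $L$-clique interiors), not a genuinely different method.
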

\begin{proof}
The proof of the necessary condition (``only if'') is the same as that for embedding $H$ in $G_{M}'$, as in the proof of Lemma~\ref{lemma_1_1}. Indeed, note that every new $L$-clique $K_L$ introduced in $H_{L}''$ for each vertex $v \in V(H)$ can be embedded into the corresponding new $L$-clique introduced in $G_{M,L}''$. This shows that $H_{L}''$ can be embedded in $G_{M,L}''$ the same way as $H$ can be embedded in $G_{M}'$.

The proof of the sufficient condition (``if'') goes as follows. Given that every vertex $v\in V(H)$ is identified with a vertex $v''$ in some $L$-clique in $H_{L}''$, the degree of $v''$ must be strictly larger than $L-1$ in $H_{L}''$ (given $L-1$ edges from the $L$-clique and extra incident edges originally from $H$). Hence, $\beta(v'')\in V(G_{M,L}'')$ also has degree more than $L-1$ in $G_{M,L}''$, and thus for sufficiently large $L$, the embedding $\beta$ cannot map $v''$ to $\mu''$ nor $\nu_2'',\cdots,\nu_M''\in V(G_{M,L}'')$,  of which the degrees do not depend on $L$. Similarly, $\beta$ cannot map $v''$ to any vertex of degree exactly $L-1$ that 
comes with the $L$-cliques introduced to $G_{M,L}''$. Hence, $\beta$ can map $v''$ only to $\nu_1''\in V(G_{M,L}'')$ that is identified with $\nu_1'\in V(G_{M}')$. The restriction is thus successfully imposed.
As the connected subgraphs $\beta(e'')$ in $G_{M,L}''$ for every $e''\in E(H_{L}'')$ that comes from $e\in E(H)$ are vertex disjoint, these connected subgraphs cannot go inside the $L$-cliques in $G_{M,L}''$. This means that $\beta$ is also an embedding that maps $H$ to $G_{M}'$.
Now that $\beta$ is restricted to mapping $v\in V(H)$ only to $\nu_1'\in V(G_{M}')$, we can introduce a function $\alpha$ that follows the same mapping of $\beta$, mapping each $v\in V(H)$ to a different vertex $\nu\in V(G)$. As the connected subgraphs $\beta(e)$ in $G_{M}'$ for every $e\in E(H)$ are vertex disjoint, the connected subgraphs $\alpha(e)$ in $G$ are guaranteed edge-disjoint. Hence, $\alpha$ is an immersion.
\end{proof}

\begin{remark}
To ensure that $M$ and $L$ are sufficiently large, they must be bounded by 
$M>|E(H)|$ and $L> M|V(G_{M}')|$, respectively. 
This yields the upper bounds 
$M=O(|V(G)|^0)$ and $L=O(|V(G_{M}')|^1)$. 
We note that in the case of $G$ being an ordinary graph, the upper bound for $L$ can be reduced to $L=O(\left|V(G_{M}')\right|^0)$~\cite{embed-fix-param-tract_gkmw11}, which is however not the case when $G$ can have hyperedges.
\end{remark}

From Theorem~\ref{the:embedding-time}, we know that the running time to check if $H_{L}''$ has an embedding in $G_{M,L}''$ is $O(\left|V(G_{M,L}'')\right|^3)$ for ordinary, simple graphs $H_{L}''$ and $G_{M,L}''$. 
The last step of proving Theorem~\ref{theorem_1} thus necessitates expressing 
$|V(G_{M,L}'')|$ in terms of $|V(G)|$. 
Recall that 
$|V(G_{M,L}'')| = |V(G_{M}')|+(L-1)|V(G)|$ where $\left|V(G_{M}')\right|=M\left|V(G)\right|+\left|E(G)\right|$. 
The following generalisation of Mader's theorem~\cite{mader_m67} puts an additional constraint on $\left|E(G)\right|$:
\begin{lemma}[Mader's theorem for hypergraphs]
\label{lemma_1_3}
Let $\left|E(G)\right|=C\left|V(G)\right|$ for a hypergraph $G$. Let $H$ be an ordinary, simple graph. If $C=C(H)$ is a sufficiently large constant that depends only on $H$, then $G$ has enough connectivity that allows $H$ to be immersed.
\end{lemma}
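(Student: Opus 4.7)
The plan is to reduce the hypergraph statement to the classical Mader theorem for ordinary graphs by extracting from $G$ an auxiliary ordinary multigraph $G^{\star}$ on the vertex set $V(G)$. For every hyperedge $e \in E(G)$ with $|e| \geq 2$, I would select an arbitrary pair of distinct vertices $u_e, v_e \in e$ and add the edge $\{u_e, v_e\}$ to $G^{\star}$, remembering the hyperedge $e$ that produced it. Hyperedges of size one are discarded because they cannot appear in any Berge path connecting two distinct vertices and so cannot lie in $\alpha(e')$ for any edge $e' \in E(H)$ of the ordinary simple graph $H$.

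Provided at least a constant fraction of the $C|V(G)|$ hyperedges has size at least two (if not, the remaining size-$1$ hyperedges are useless for immersion and the claim is vacuous or requires only a trivial adjustment), one has $|E(G^{\star})| \geq C'|V(G)|$ for some $C' = \Theta(C)$, and hence $G^{\star}$ has average degree at least $2C'$. By the classical Mader theorem~\cite{mader_m67}, or its immersion-version quantitative strengthening, once $C'$ exceeds a threshold depending only on $H$ the multigraph $G^{\star}$ contains $H$ as an immersion.

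This immersion transfers back to $G$ with no further work. Each ordinary edge $\{u_e, v_e\} \in E(G^{\star})$ is the Berge path $u_e, e, v_e$ of length one in $G$, and distinct edges of $G^{\star}$ correspond by construction to distinct hyperedges of $G$. Consequently, the edge-disjointness of the path family realising the $G^{\star}$-immersion translates into hyperedge-disjointness in $G$, and the resulting family of connected edge-disjoint subgraphs satisfies all three conditions of Definition~\ref{definition_1}.

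The step I expect to be the main obstacle is selecting a quantitative Mader-type bound valid for multigraphs, since two distinct hyperedges of $G$ may select the same pair $\{u_e, v_e\}$ and thereby create parallel edges in $G^{\star}$. The standard proof of Mader's theorem extends without change to multigraphs (parallel edges still contribute to degree), but if one prefers to work with simple graphs it suffices to choose pairs $\{u_e, v_e\}$ greedily so as to avoid repetition; this still produces a linear number of distinct edges once $C$ is taken sufficiently large in terms of $H$, at the cost of enlarging the final constant $C(H)$.
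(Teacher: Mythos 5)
Your proposal is correct and follows essentially the same route as the paper: the paper reduces each hyperedge of $G$ to an ordinary edge via the dewetting operation (which is exactly your choice of a pair $\{u_e,v_e\}\subseteq e$), applies Mader's theorem to the resulting multigraph, and pulls the immersion back to $G$ because the reduced graph is by construction immersed in $G$. The only cosmetic difference is the treatment of parallel edges, where the paper caps the multiplicity at $\left|E(H)\right|$ per vertex pair and rescales the constant, while you invoke the multigraph form of Mader's theorem directly.
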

\begin{proof}
The original Mader's theorem~\cite{mader_m67} is applicable when $H$ and $G$ are ordinary, simple graphs. Its extension to hypergraphs $G$ with multi-hyperedges can be done by the following: First, reduce $G$, using the dewetting operation in Definition~\ref{definition_2}, to an ordinary graph with the same number of vertices and (ordinary) edges; then, remove all but $\left|E(H)\right|$ multiedges per pair of vertices from the ordinary graph. The resulting graph, denoted $G_0$, has at most $\left|E(H)\right|$ multiedges per pair of vertices.

Let $\left|E(G_0)\right|/\left|E(H)\right|=C_0  \left|V(G_0)\right|$. 
By Mader's theorem, If $C_0=C_0(H)$ is a sufficiently large constant that depends only on $H$, then $G_0$ has enough connectivity that allows $H$ to be immersed. Furthermore, this also indicates that $H$ can be immersed in $G$, since by definition $G_0$ is automatically immersed in $G$.
\end{proof}

Finally, we give the complete proof of Theorem~\ref{theorem_1} as follows. 
\begin{proof}[Proof of Theorem~\ref{theorem_1}]
    From Lemma~\ref{lemma_1_2}, to decide if $H$ can be immersed in $G$ is equivalent to determining if the densified graph $H_L''$ has an embedding in the densified, M-generalised factor graph $G_{M,L}''$. 
    From Theorem~\ref{the:embedding-time}, there is an $O(|V(G_{M,L}''|^3)$ time algorithm for the latter. From the construction of $G_{M,L}''$, we know that $|V(G_{M,L}'')| = |V(G_{M}')|+(L-1)|V(G)|$ where $\left|V(G_{M}')\right|=M\left|V(G)\right|+\left|E(G)\right|$. Hence, 
    \begin{align*}
        \left|V(G_{M,L}'')\right| = M\left|V(G)\right|+\left|E(G)\right| + (L-1)\left|V(G)\right|.
    \end{align*}

    The proof of Lemma~\ref{lemma_1_2} implies $M=O(|V(G)|^0)$ and $L=O(|V(G_{M}')|^1)$. From Lemma~\ref{lemma_1_3}, $\left|E(G)\right|$ is, at most, $O(\left|V(G)\right|)$. Hence, 
    \begin{align*}
        \left|V(G_{M,L}'')\right|=O(\left|V(G)\right|^2)
    \end{align*}
    Taken together, the running time for testing immersion is $O(\left|V(G)\right|^6)$.
\end{proof}

\section{Proof for hypergraph $H$}
\label{sec:proof_hyper}
Now we consider the case when $H$ is a hypergraph:
\begin{theorem}
\label{theorem_2}
	For every hypergraph $H$, there is an $O(\left|V(G)\right|^6)$ time algorithm that decides if $H$ can be immersed in a hypergraph $G$.
\end{theorem}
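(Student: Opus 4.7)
The plan is to reduce the hypergraph $H$ case to the ordinary simple graph case of Theorem~\ref{theorem_1}, following the second bullet in the proof sketch of Section~\ref{sec:main_results}. The central difficulty relative to the ordinary case is that a hyperedge $e\in E(H)$ of size $k$ is immersed as a connected subgraph $\alpha(e)\subseteq G$ whose Berge-tree skeleton can take any tree topology on the $k$ terminals $\alpha(v_1),\dots,\alpha(v_k)$; the native factor graph $F(H)$ encodes $e$ as the star $K_{1,k}$, so directly embedding $F(H)$ into $F(G)$ would only capture immersions whose hyperedge images happen to be single hyperedges of $G$.

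First I would formalise the notion of a \emph{division} $\tilde{H}$ of $H$. For each $e\in E(H)$ with endpoint set $\{v_1,\dots,v_{k_e}\}$, pick a tree $T_e$ whose leaves are labelled by these endpoints and whose internal nodes have degree at least $3$; in $\tilde{H}$, replace $e$ by a hypergraph whose factor graph is exactly $T_e$ (each internal node of degree $d$ becomes a size-$d$ hyperedge incident to the corresponding neighbours, with auxiliary vertices identified appropriately). Running this replacement independently over all $e\in E(H)$ yields one division. The number of labelled trees on $k_e$ leaves with all internal nodes of degree $\geq 3$ is a finite function of $k_e$ alone, so the total number $\mathcal{D}(H)$ of divisions depends only on $H$.

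The equivalence I would prove next is: $H$ can be immersed in $G$ if and only if some division $\tilde{H}$ of $H$ can be immersed in $G$. The backward direction is routine, since every division is immersed in $H$ by iteratively coalescing the sub-hyperedges corresponding to each $T_e$ back into the original $e$, and immersion composes via Proposition~\ref{pro:definition_equiv}. For the forward direction, given an immersion $\alpha$, for each $e\in E(H)$ I extract any spanning Berge-tree of $\alpha(e)$ containing the terminals $\alpha(v_i)$, suppress interior degree-2 nodes to obtain a topological tree $T_e$, and let $\tilde{H}$ be the resulting division; edge-disjointness of the images $\alpha(e)$ in $G$ transfers verbatim to the sub-hyperedges of $\tilde{H}$, yielding the required immersion of $\tilde{H}$ in $G$.

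With the equivalence in hand, I would run, for each of the $\mathcal{D}(H)$ divisions $\tilde{H}$, the densified factor-graph machinery of Section~\ref{sec:proof1-densify}. Since $\tilde{H}$ is generally still a hypergraph, I would apply the densification construction to $\tilde{H}$ itself as well as to $G$, producing ordinary simple graphs $\tilde{H}_L''$ and $G_{M,L}''$, and invoke Theorem~\ref{the:embedding-time}. Lemmas~\ref{lemma_1_2} and~\ref{lemma_1_3} carry over (the latter is already stated for hypergraph $G$), so $M=O(1)$, $L=O(|V(G_M')|)$, $|E(G)|=O(|V(G)|)$, whence $|V(G_{M,L}'')|=O(|V(G)|^2)$ and each division is tested in $O(|V(G)|^6)$ time; the multiplicative factor $\mathcal{D}(H)$ depends only on $H$ and does not inflate the exponent in $|V(G)|$. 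The step I expect to be the main obstacle is the forward direction of the equivalence: one must argue that the ``topological class'' of $\alpha(e)$ is captured uniformly in $G$ by the finite collection of admissible $T_e$'s, and then verify that the densified embedding check on a hypergraph $\tilde{H}$ still forces $\beta$ to respect the vertex/hyperedge bipartition of the factor graph. The latter should follow from the same degree argument as in the proof of Lemma~\ref{lemma_1_2}, which is insensitive to edge arity, but the bookkeeping between divisions, spanning Berge-trees and suppressed degree-2 nodes is where I would expect the technical care to be concentrated.
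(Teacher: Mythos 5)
Your overall architecture matches the paper's: finitely many divisions of $H$ obtained by replacing each hyperedge by a tree on its endpoints with degree-$2$ interior nodes suppressed, followed by the densified factor-graph embedding test and the same $M=O(1)$, $L=O(|V(G_M')|)$, $|E(G)|=O(|V(G)|)$ accounting to get $O(|V(G)|^6)$. However, your central equivalence --- ``$H$ can be immersed in $G$ if and only if some division $\tilde{H}$ can be immersed in $G$'' --- has a false forward direction, and this is precisely the point where the paper has to be more careful. An immersion of $H$ in $G$ only requires the subgraphs $\alpha(e)$ to be \emph{edge}-disjoint, so the spanning Berge-trees of $\alpha(e_1)$ and $\alpha(e_2)$ may branch at the \emph{same} vertex $u\in V(G)$. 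In the division $\tilde{H}$ those two branch points become two distinct auxiliary vertices, and condition (1) of Definition~\ref{definition_1} forces an immersion of $\tilde{H}$ in $G$ to send them to distinct vertices of $G$; if $G$ is small enough (e.g., $|V(\tilde{H})|>|V(G)|$) no such immersion exists even though $H$ is immersed in $G$. So ``edge-disjointness transfers verbatim'' does not rescue vertex-injectivity for the new auxiliary vertices. The paper avoids this by never asserting an immersion of $\tilde{H}$ in $G$: Lemma~\ref{lemma_division_set} only claims an \emph{embedding} of the factor graph $\tilde{H}'$ into $G_M'$, where the $M$ duplicate copies $\nu_1',\dots,\nu_M'$ of each vertex of $G$ are exactly what absorbs the vertex sharing, and the proof of Lemma~\ref{lemma_2_2} explicitly notes that one may fail to find an immersion of $\tilde{H}$ in $G$ while still recovering an immersion of $H$.

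The same issue resurfaces in your densification step. You propose to ``apply the densification construction to $\tilde{H}$ itself,'' but in Section~\ref{sec:proof1-densify} densification replaces \emph{every} vertex by a $K_L$, and doing so to the auxiliary vertices of $\tilde{H}$ would force the embedding to place each of them on a distinct $\nu_1''$, i.e., a distinct underlying vertex of $G$ --- reintroducing the over-restriction above. The paper's construction densifies only the vertices of $\tilde{H}'$ that correspond to original vertices of $V(H)$, leaving the auxiliary vertices (and the hyperedge-nodes) undensified so that they remain free to land on the duplicates $\nu_2'',\dots,\nu_M''$. Your proposal correctly anticipates that ``the bookkeeping between divisions, spanning Berge-trees and suppressed degree-2 nodes is where the technical care is concentrated,'' but the specific reduction you commit to tests a strictly stronger condition than immersion of $H$ in $G$ and would reject valid instances; to repair it you must state the equivalence at the level of embeddings of $\tilde{H}'$ in $G_M'$ and exempt the auxiliary vertices from densification.
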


The proof structure of Theorem~\ref{theorem_2} further generalises the proof of Theorem~\ref{theorem_1} to the case when $H$ can have hyperedges. 
However, since $H$ is a hypergraph, its embedding is not defined, and hence Theorem~\ref{theorem_1} is not directly applicable. A seemingly natural approach to overcome this, mirroring the treatment of $G$, is to construct the factor graph of $H$ (which is an ordinary graph) and consider the embedding of this factor graph (as well as its ``densified'' version).
We denote the factor graph of $H$ by $H'$.
each hyperedge ${e}\in E(H)$ corresponds to a vertex ${e}'\in V(H')$, and each vertex ${v}\in V(H)$ is identified with ${v}'\in V(H')$. Let all vertices in $H'$ be isolated from each other unless, if hyperedge $e$ is incident to $v$ in $H$, then vertex $e'$ is adjacent to $v'$ in $H'$.

Unfortunately, the fact that $H$ has an immersion in $G$ does not necessarily lead to that the factor graph $H'$ can be embedded in $G_{M}'$; see Fig.~\ref{fig_counterexample} for an example. 
As a result, Lemma~\ref{lemma_1_1} does not directly apply to hypergraphs, necessitating a new approach (Lemma~\ref{lemma_division_set}, as we will see) to prove Theorem~\ref{theorem_2}.

The underlying reason why Lemma~\ref{lemma_1_1} is inapplicable is that while immersion only considers whether vertices are connected (Definition~\ref{definition_1}), embedding emphasises on \emph{how} they are connected topologically. This distinction becomes crucial for hyperedges. Consider a hyperedge connecting $r$ vertices. These vertices can be connected in various ways, following different spanning tree structures. These structures are not topologically equivalent to the hyperedge's factor graph representation, which always has a star topology (Fig.~\ref{fig_counterexample}).

\begin{figure}[t!]
	\centering
    \includegraphics[width=243pt]{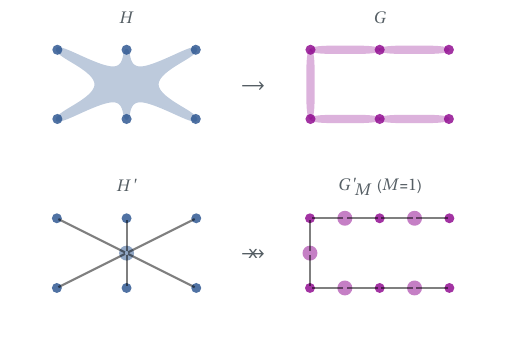}
    \caption{\textbf{Hypergraph $H$ and its factor graph $H'$.} While $H$ can be immersed in $G$, the factor graph $H'$ may not be embedded in $G_{M}'$.\hfill\hfill}
    \label{fig_counterexample}
\end{figure}
To circumvent this, we consider the factor graph(s) of not $H$, but a larger (yet finite) set of hypergraphs, $\mathcal{D}(H)$, as we will introduce below. 

\begin{figure}[t!]
	\centering
    \begin{minipage}[b]{52pt}
		\centering
		{\includegraphics[width=243pt]{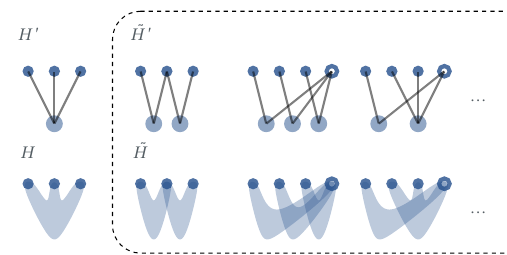}\subcaption{\label{fig_division_hyperedge_a}}}
	\end{minipage}
    \begin{minipage}[b]{52pt}
		\centering
		\subcaption{\label{fig_division_hyperedge_b}}
	\end{minipage}
    \begin{minipage}[b]{52pt}
		\centering
		\subcaption{\label{fig_division_hyperedge_c}}
	\end{minipage}
    \begin{minipage}[b]{52pt}
		\centering
		\subcaption{\label{fig_division_hyperedge_d}}
	\end{minipage}
    \caption{\textbf{Division of a hyperedge.} Given a hyperedge $e$ of size $r$ (here $r=3$), let $H=\{e\}$ and its factor graph be $H'$. Replace $H'$ by a factor graph $\Tilde{H}'$ of some hypergraph $\Tilde{H}$, such that  $\Tilde{H}'$ is a Steiner tree with the $r$ vertices in $\Tilde{H}$ as the terminals. This hypergraph $\Tilde{H}$ is defined as a division of $e$. While the number of possible divisions of $e$ is infinite, there are only finite topological classes of those divisions. For example, \subref{fig_division_hyperedge_c}~and~\subref{fig_division_hyperedge_d} [but not~\subref{fig_division_hyperedge_b}] belong to the same topological class of~\subref{fig_division_hyperedge_a}, as their factor graphs are equivalent up to subdivisions.\hfill\hfill}
    \label{fig_division_hyperedge}
\end{figure}

\subsection{Division of a hyperedge}
Suppose $H$ is a single hyperedge $e$ of size $r$, i.e.,~$E(H)=\{e\}$ and $\left|V(H)\right|=r$. Its factor graph, $H'$, is a bipartite graph between the two vertex partitions, $V(H')=V(H)\cup E(H)$. 
Instead of only the factor graph, we will also consider some other bipartite graphs, each denoted by $\Tilde{H}'$.
We know that there is a bijective correspondence between factor graphs of hypergraphs and bipartite graphs, given the correspondence between the vertex, edge sets in the hypergraphs, and the two parts in the bipartite graphs. Hence, $\Tilde{H}'$ must be the factor graph of some hypergraph. We denote this hypergraph w.l.o.g.~by $\Tilde{H}$.
Specifically, we require that 
$\Tilde{H}'$ must be a \emph{Steiner tree} with the $r$ vertices 
in $H$ as terminals, i.e., $\Tilde{H}'$ is a tree that connects the $r$ vertices such that removing any edge will disconnect the graph (since we consider unweighted graphs in this paper).

\begin{lemma}
    A Steiner tree of an unweighted bipartite graph with $r$ vertices in the same partition as terminals has at most $r$ degree-$1$ vertices. 
    \label{lem:division-e-r-leaves}
\end{lemma}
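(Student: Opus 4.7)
The plan is to show that every degree-$1$ vertex (leaf) of the Steiner tree must be one of the $r$ designated terminals; the bound then follows immediately because there are only $r$ terminals. This reduces the statement to the minimality characterisation of Steiner trees given in the paper, namely that removing any edge disconnects the tree.

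First I would fix notation: let $T = \tilde{H}'$ be the Steiner tree, let $S \subseteq V(T)$ be the set of $r$ terminals (the vertex-side partition corresponding to $V(H)$ in the bipartite/factor-graph picture), and let $w \in V(T)$ be an arbitrary leaf, i.e.\ a vertex of degree exactly $1$ in $T$. The goal is to prove $w \in S$.

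The key step is the following contrapositive argument: suppose for contradiction that $w \notin S$, and let $f$ be the unique edge of $T$ incident to $w$. Deleting $f$ from $T$ produces two connected components, one of which is the isolated vertex $\{w\}$ and the other of which is a subtree $T'$ containing every vertex of $V(T) \setminus \{w\}$. Because $w \notin S$, the subtree $T'$ still contains all $r$ terminals and, being a tree, is connected. Hence $T'$ is a connected subgraph of the ambient bipartite graph that still connects the $r$ terminals, contradicting the minimality property that removing any edge of the Steiner tree $T$ must disconnect the terminal set. Therefore every leaf of $T$ must lie in $S$, and so the number of leaves is at most $|S| = r$.

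I do not expect a serious obstacle: the whole argument is just the standard fact that Steiner trees have no non-terminal leaves, combined with the fact that all $r$ terminals lie in a single partition class of the bipartite graph. The bipartite structure plays no role beyond locating the terminals; the weightless assumption is used only to legitimise the ``remove an edge, keep a tree'' reduction without having to compare total weights. If desired, one could package the argument as an induction on the number of non-terminal leaves, but the direct contradiction above is cleaner.
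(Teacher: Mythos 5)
Your proof is correct and rests on the same idea as the paper's: a non-terminal leaf could be deleted (equivalently, its unique incident edge removed) without disconnecting the $r$ terminals, contradicting the minimality of the Steiner tree, so every leaf is a terminal and the bound $r$ follows. The only difference is organisational --- the paper splits the argument according to which side of the bipartition the leaf lies in, whereas you handle all leaves uniformly and correctly observe that the bipartite structure is not actually needed.
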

\begin{proof}
    We denote the two partitions of the bipartite graph as $V_1, V_2$, and w.l.o.g., we assume the $r$ vertices are in $V_1$. 
    We first note that a Steiner tree cannot contain vertices in $V_1$ that are not the $r$ terminals and have degree $1$, since they only connect with vertices in $V_2$, and removing them does not affect the connectivity of the terminals. Hence, there are at most $r$ vertices of degree $1$ in $V_1$.
    
    Now suppose by contradiction that there are $r+1$ vertices of degree $1$, then there is at least one vertex in $V_2$ of degree $1$. However, removing this vertex will not affect the connectivity of the $r$ terminals, hence contradicting the graph being a Steiner tree. 
\end{proof}
A \textit{division of a hyperedge} $e$ is defined by replacing the hyperedge $e$ with the hypergraph corresponding to a Steiner tree, $\Tilde{H}$; see Fig.~\ref{fig_division_hyperedge}. Note that this replacement may introduce additional vertices and hyperedges; thus, $\Tilde{H}$ can be larger than $H$. 

\subsection{Division of a hypergraph}
\label{sec:division-H}
A \textit{division of a hypergraph} $H$ is defined as the hypergraph $\Tilde{H}$ such that all hyperedges in $H$ are replaced by their divisions. By definition, $H$ is a division of itself.

\begin{remark}
By definition, each hyperedge $e\in E(H)$ corresponds to a connected subgraph in $\Tilde{H}$ (from its factor graph being connected). 
Then we can define a mapping from $H$ to $\Tilde{H}$, $\alpha$, such that $\alpha(e)$ returns the corresponding connected subgraph, and for each $v\in V(H)$, $\alpha(v)$ returns the corresponding vertex in $\Tilde{H}$.
It is straightforward to check that $\alpha$ satisfies the three requirements of immersion in Definition~\ref{definition_1}. Thus, $H$ has an immersion in every $\Tilde{H}$.
\end{remark}

\begin{lemma}
\label{lemma_division_set}
    If $H$ has an immersion $\alpha$ in $G$, then there exists at least one division $\Tilde{H}$ whose factor graph $\Tilde{H}'$ has an embedding $\beta$ in the $M$-generalised factor graph $G'_M$ given that $M$ is sufficiently large.
\end{lemma}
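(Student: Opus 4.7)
The plan is to construct the division $\Tilde{H}$ directly from the immersion $\alpha$ and then read off $\beta$ from the factor-graph data of the chosen connected subgraphs. For each $e \in E(H)$ with ends $v_1,\dots,v_{|e|}$, the image $\alpha(e)$ is a connected sub-hypergraph of $G$ containing $\alpha(v_1),\dots,\alpha(v_{|e|})$. I would pass to its factor graph $F(\alpha(e)) \subseteq F(G)$, which is a connected bipartite graph, and extract within it a Steiner tree $T_e$ whose terminals are the vertex-side copies of $\alpha(v_1),\dots,\alpha(v_{|e|})$ (for instance, by starting from any spanning tree of $F(\alpha(e))$ and repeatedly pruning non-terminal leaves, so that by Lemma~\ref{lem:division-e-r-leaves} only the terminals remain as degree-one vertices on the vertex side). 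Via the bijective correspondence between bipartite graphs and factor graphs of hypergraphs, $T_e$ is itself the factor graph of a unique hypergraph $\Tilde{H}_e$, which by construction is a division of $e$. The desired $\Tilde{H}$ is obtained by simultaneously replacing each $e \in E(H)$ with $\Tilde{H}_e$, identifying terminals originating from the same $v \in V(H)$ across different $\Tilde{H}_e$.

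Next, I would define $\beta$ on $V(\Tilde{H}') \cup E(\Tilde{H}')$ directly from the $T_e$. The vertex set $V(\Tilde{H}')$ partitions into three types: (i) terminal vertex-type vertices coming from $V(H)$; (ii) non-terminal, vertex-type Steiner points of the $T_e$, each corresponding to some $\nu \in V(G)$ that appears inside $\alpha(e)$ but is not among its terminals; and (iii) hyperedge-type vertices, each corresponding to a hyperedge $\mu \in E(G)$ used by some $T_e$. I would send each type-(i) vertex $v$ to $\alpha(v)_1' \in V(G'_M)$, each type-(iii) vertex to the canonical $\mu' \in V(G'_M)$, and each occurrence of a type-(ii) vertex $\nu$ to a distinct copy $\nu_j'$ with $j \ge 2$, using a different $j$ for each of the (at most $|E(H)|$) trees in which $\nu$ appears as a Steiner point. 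Every edge of $\Tilde{H}'$ corresponds to an edge of some $T_e \subseteq F(\alpha(e))$, joining a hyperedge-type vertex $\mu$ to one of its incident vertex-type vertices $\nu$; I would map this to the length-one path in $G'_M$ formed by the edge $\{\mu', \nu_j'\}$, which exists by the defining adjacency rule of Section~\ref{sec:M-factor-G}.

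Verifying the embedding axioms is then mostly bookkeeping. Injectivity on type-(iii) vertices uses Condition~(3) of Definition~\ref{definition_1}: a hyperedge of $G$ appears in at most one $\alpha(e)$ and hence in at most one $T_e$, so no $\mu'$ is hit twice. Injectivity on type-(i) vertices is inherited from injectivity of $\alpha$ on $V(H)$; the remaining injectivity between types (i) and (ii), and among the various type-(ii) occurrences of a fixed $\nu$, is arranged by the distinct-copy rule provided $M \ge |E(H)| + 1$. The paths $\beta(\tilde{e})$ have length one and therefore no internal vertices, so vertex-disjointness of the images is vacuous. The one subtlety I would flag as the main obstacle is the treatment of shared terminals: whenever $v \in V(H)$ is incident to several hyperedges of $H$, the several $T_e$ contribute several edges of $\Tilde{H}'$ at $v$, all of which must be realized at the single copy $\alpha(v)_1'$ in $G'_M$. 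This works cleanly precisely because, by the construction of $G'_M$ in Section~\ref{sec:M-factor-G}, every hyperedge-vertex $\mu' \in V(G'_M)$ is adjacent to all $M$ copies of each incident $\nu \in V(G)$, and in particular to $\alpha(v)_1'$ whenever $\mu$ is incident to $\alpha(v)$ in $G$, so the star of edges at $v$ in $\Tilde{H}'$ lifts faithfully to a star at $\alpha(v)_1'$ in $G'_M$.
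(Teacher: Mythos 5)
Your proposal is correct and follows essentially the same route as the paper's own proof: extract a Steiner tree from the factor graph of each $\alpha(e)$, let these trees induce the division $\Tilde{H}$, and use the $M$ duplicate copies of each vertex of $G$ to keep the images vertex-disjoint in $G'_M$. Your write-up is in fact more explicit than the paper's (which leaves the definition of $\beta$ and the injectivity bookkeeping implicit by appeal to Lemma~\ref{lemma_1_1}), and your case analysis of the three vertex types and the bound $M \ge |E(H)|+1$ correctly fill in those details.
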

\begin{proof}
For each hyperedge $e \in E(H)$, consider its corresponding connected subgraph $\alpha(e)$ in $G$ under the immersion. Due to its connectivity, the factor graph of $\alpha(e)$ necessarily contains a Steiner tree connecting all vertices incident to $e$. This Steiner tree induces a division of $e$, resulting in a division $\tilde{H}$ of $H$. Hence, following the approach in the proof of Lemma~\ref{lemma_1_1}, we can utilise the $M$ copies of each vertex in $G'_M$; and for sufficiently large $M$, we can independently construct vertex-disjoint divisions for each hyperedge in $\tilde{H}'$, yielding an embedding $\beta$ of $\tilde{H}'$ in $G'_M$.
\end{proof}

Hence, Lemma~\ref{lemma_division_set} (in parallel to Lemma~\ref{lemma_1_1}) can be used to establish the necessary condition of Lemma~\ref{lemma_1_2} when extended to hypergraph $H$. 
In the case where $H$ is a hypergraph, we need to consider not only $H$, but also all possible divisions $\Tilde{H}$, and check whether their factor graphs $\Tilde{H}'$ have an embedding in $G'_M$ until we find one that satisfies the condition or all graphs have been exhausted. 
The complexity of the algorithm depends on the size of all possible divisions that are necessary to be tested. 
In general, there can be infinitely many divisions of hypergraph $H$ if there is no constraint. However, as we will show below, the testing set of $\Tilde{H}'$ can be effectively reduced to be of finite size which depends only on $H$, via the notion of \emph{topological equivalence}. 

\subsection{Topological equivalence of divisions}
We define two hypergraphs as topologically equivalent if and only if the factor graph of one hypergraph can be obtained from the factor graph of the other with the subdivision operation as defined in Definition~\ref{def:embedding-subdivision}. It is straightforward to show that the topological equivalence is an equivalence relation, and we refer to a set of hypergraphs that are topologically equivalent as a \emph{topological class}.

\begin{lemma}
    \label{lemma_division_finite}
Given a finite hypergraph $H$, there is only a finite number of topological classes for all divisions of $H$.
\end{lemma}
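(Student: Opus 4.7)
The plan is to reduce the problem to counting bounded-size skeletons of Steiner trees. Since a division of $H$ is obtained by independently choosing a division of each hyperedge $e \in E(H)$, and since $|E(H)|$ is finite, it suffices to show that the divisions of a single hyperedge $e$ fall into finitely many topological classes; the product of per-hyperedge counts then yields finiteness for $H$ itself. The terminals of different hyperedges are pinned down by the shared vertex set $V(H)$, so the independent choices really do assemble into divisions of $H$ without introducing extra degrees of freedom.

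Fix a hyperedge $e$ of size $r$. Any division of $e$ corresponds to a Steiner tree $\tilde{H}'$ with the $r$ vertices of $e$ as terminals. By Lemma~\ref{lem:division-e-r-leaves}, $\tilde{H}'$ has at most $r$ leaves, all lying in the vertex partition. A standard handshake-style degree-sum argument on trees then gives that a tree with at most $\ell$ leaves has at most $\ell - 2$ vertices of degree $\geq 3$. Consequently, iteratively suppressing all non-terminal vertices of degree $2$ produces a \emph{topological skeleton} with at most $2r - 2$ vertices. I would then verify that topological equivalence via the subdivision operation of Definition~\ref{def:embedding-subdivision} corresponds exactly to having isomorphic skeletons: subdivisions clearly preserve the skeleton, and conversely any two factor graphs sharing a skeleton can be reached from a common subdivision refinement by reverse subdivisions. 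Since the number of trees on at most $2r - 2$ vertices with $r$ labelled terminals is finite, the number of topological classes of divisions of $e$ is finite, and the lemma follows after taking the product over the finitely many hyperedges of $H$.

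The main obstacle will be cleanly reconciling the bipartite colouring of factor graphs with the subdivision operation: a single subdivision inserts a vertex that belongs to neither partition, so one must either treat subdivision chains as acting on the underlying ordinary graph (and verify that the skeleton is preserved regardless of how the intermediate non-bipartite steps are labelled) or restrict to pairs of subdivisions that respect the vertex-type/hyperedge-type alternation. In either interpretation the skeleton has size $O(r)$, the set of labelled skeletons is finite, and so the set of topological classes is finite as claimed.
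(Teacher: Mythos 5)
Your proposal is correct and follows essentially the same route as the paper's proof: bound the leaves via Lemma~\ref{lem:division-e-r-leaves}, bound the branch vertices by $r-2$, smooth away the degree-$2$ vertices to obtain a finite skeleton, and take the product over the finitely many hyperedges of $H$. You are in fact somewhat more explicit than the paper about the bipartite-labelling subtlety in the smoothing step, which the paper only addresses with a parenthetical remark about preserving the bipartite-graph labeling of the tree.
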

\begin{proof}
    By the definition of division in section~\ref{sec:division-H}, for every hyperedge $e\in E(H)$, its division's factor graph is a Steiner tree of some bipartite graph.  
    By Lemma~\ref{lem:division-e-r-leaves}, the factor graph has at most $r = |e|$ leaves, and in a similar manner, we can show that it has at most $r-2$ vertices of degrees larger than $2$. 
    This leaves the rest, possibly infinitely many vertices, only have degree $2$, and thus can be removed without disturbing the topological equivalence. Specifically, this is achieved by the inverse of graph subdivision, which is referred to as the \textit{smoothing} operation: for a vertex $w$ of degree $2$ and two neighbors $u,v$, the smoothing is to remove vertex $w$ and replace the two edges $\{u,w\}$ and $\{w,v\}$ by a new edge $\{u,v\}$. 
    The result, after exhausting all possible subdivisions (while preserving the bipartite-graph labeling of the tree), is still a Steiner tree of a bipartite graph with the $r$ vertices in $e$ as the terminals, but with only a finite number of vertices. 
    Therefore, there are only a finite number of such trees of different topologies. Each tree corresponds to a different topological class.

    Since $H$ only has a finite number of hyperedges, the overall number of topological classes of $H$, which is a subset of the direct product of different classes for each hyperedge, is still finite.
\end{proof}
\begin{lemma}
    There exists a finite, minimum set $\mathcal{D}(H)$ of all divisions of $H$ that cover all their possible topological classes.
    \label{lem:hyperH-division-finite}
\end{lemma}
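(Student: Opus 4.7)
The plan is to build $\mathcal{D}(H)$ directly as a set of canonical representatives, one per topological class, and then observe that minimality is automatic. By Lemma~\ref{lemma_division_finite}, the collection of topological classes of divisions of $H$ is finite; call its cardinality $k$. For each such class $\mathcal{T}$, I would exhibit a concrete representative $\Tilde{H}_{\mathcal{T}}$ of $\mathcal{T}$ by exploiting the construction used in the proof of Lemma~\ref{lemma_division_finite}: for every hyperedge $e\in E(H)$ of size $r$, the corresponding factor-graph component of any division is a Steiner tree, and after exhausting the smoothing operation (while respecting the bipartite labelling), it reduces to a tree with at most $r$ leaves and at most $r-2$ branching vertices, hence of bounded size. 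Collecting these smoothed trees across all hyperedges of $H$ yields a finite, concretely describable division in each class.

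Setting $\mathcal{D}(H) := \{\Tilde{H}_{\mathcal{T}} : \mathcal{T} \text{ is a topological class of divisions of } H\}$, the set has cardinality $k$, hence is finite. By construction it meets every topological class, so it ``covers all their possible topological classes'' in the sense required by the statement.

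For minimality, I would argue by contradiction: if $\mathcal{D}' \subsetneq \mathcal{D}(H)$ also covered every topological class, then removing some $\Tilde{H}_{\mathcal{T}} \in \mathcal{D}(H) \setminus \mathcal{D}'$ must still leave $\mathcal{T}$ covered, meaning some other $\Tilde{H}_{\mathcal{T}'} \in \mathcal{D}'$ is topologically equivalent to $\Tilde{H}_{\mathcal{T}}$. But since topological equivalence is an equivalence relation and each representative was chosen from a distinct class, this forces $\mathcal{T} = \mathcal{T}'$, contradicting the distinctness of the representatives in $\mathcal{D}(H)$. Thus $\mathcal{D}(H)$ is a minimum covering set.

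I expect no serious obstacle here: the content of the lemma is essentially a set-theoretic repackaging of Lemma~\ref{lemma_division_finite}, upgraded from ``finitely many classes'' to ``a finite transversal exists.'' The only mild care needed is in the canonical choice of representative; using the fully-smoothed Steiner tree per hyperedge makes this both constructive and concrete, which will also be convenient later since each representative in $\mathcal{D}(H)$ has size bounded by a function of $H$ alone, independent of $G$. This bounded size is what eventually allows the overall algorithm's exponent to remain constant in $|V(G)|$ when one runs the embedding test of Theorem~\ref{the:embedding-time} separately on every member of $\mathcal{D}(H)$.
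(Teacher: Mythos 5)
Your proposal is correct and takes essentially the same route as the paper, whose own proof of this lemma is simply the one-line observation that it follows from Lemma~\ref{lemma_division_finite}; your argument just spells out the intended content (one fully-smoothed canonical representative per topological class, with minimality forced by the distinctness of classes).
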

\begin{proof}
    It is straightforward from Lemma~\ref{lemma_division_finite}.
\end{proof}

\begin{lemma}
\label{lemma_2_1}
    If $H$ has an immersion $\alpha$ in $G$, then there exists at least one division $\Tilde{H}\in \mathcal{D}(H)$ whose factor graph $\Tilde{H}'$ has an embedding $\beta$ in the $M$-generalised factor graph $G'_M$ given that $M$ is sufficiently large.
\end{lemma}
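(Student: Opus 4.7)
The plan is straightforward given the machinery already in place: Lemma~\ref{lemma_division_set} produces an embedding into $G_M'$ for \emph{some} division of $H$, while Lemma~\ref{lem:hyperH-division-finite} guarantees that $\mathcal{D}(H)$ contains a representative from every topological class of divisions. The task therefore reduces to transferring embeddability from an arbitrary division to the chosen representative in its class, which I would do via a path-concatenation argument driven by the smoothing operation of Lemma~\ref{lemma_division_finite}.

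First, I apply Lemma~\ref{lemma_division_set} to the immersion $\alpha$, obtaining a division $\Tilde{H}$ of $H$ (a priori not in $\mathcal{D}(H)$) together with an embedding $\beta$ of its factor graph $\Tilde{H}'$ into $G_M'$ for sufficiently large $M$. Let $T$ denote the topological class of $\Tilde{H}$, and choose $\Tilde{H}^{\star} \in \mathcal{D}(H)\cap T$, which exists by the construction of $\mathcal{D}(H)$. I may take $\Tilde{H}^{\star}$ to be the fully smoothed representative from the proof of Lemma~\ref{lemma_division_finite}, so that $\Tilde{H}^{\star\prime}$ is obtained from $\Tilde{H}'$ by iterated (bipartite-label-preserving) smoothings, or equivalently, $\Tilde{H}'$ is obtained from $\Tilde{H}^{\star\prime}$ by a sequence of edge subdivisions.

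The core step is then to translate $\beta$ into an embedding of $\Tilde{H}^{\star\prime}$ into $G_M'$ by induction on the number of smoothings applied. At each smoothing step, a degree-$2$ vertex $w$ with neighbors $u,v$ in the current factor graph is removed, and the two path images $\beta(\{u,w\})$ and $\beta(\{w,v\})$ in $G_M'$ -- which by Definition~\ref{def:embedding} share no internal vertices and meet only at $\beta(w)$ -- are concatenated into a single path that I assign as the image of the new edge $\{u,v\}$. The vertex $\beta(w)$ ceases to serve as a vertex image and becomes an internal vertex of this concatenated path, and since $\beta(w)$ was disjoint from all other path images and vertex images under $\beta$, the vertex-disjointness requirement of an embedding is preserved. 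Iterating yields the required embedding of $\Tilde{H}^{\star\prime}$ into $G_M'$.

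The main obstacle I anticipate is ensuring that the smoothings respect the bipartite labeling of factor graphs, i.e.~that the resulting $\Tilde{H}^{\star\prime}$ remains the factor graph of a genuine hypergraph $\Tilde{H}^{\star}$ that is itself a division of $H$, rather than a merely unlabeled tree. Because the two parts of a factor graph encode vertex-type versus hyperedge-type vertices and must alternate along paths, naive smoothing of a single degree-$2$ vertex would violate bipartiteness; one must instead smooth pairs of consecutive degree-$2$ vertices together, as in the proof of Lemma~\ref{lemma_division_finite}. Fortunately, the path-concatenation argument adapts immediately to this paired setting (concatenating three sub-paths rather than two), so once this bookkeeping is correctly invoked the proof is complete.
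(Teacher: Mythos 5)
Your proposal is correct and follows essentially the same route as the paper's proof: invoke Lemma~\ref{lemma_division_set} to get an embedding of the factor graph of \emph{some} division, then pass to the fully smoothed representative in $\mathcal{D}(H)$ using the fact that embeddability is preserved under inverse subdivision of the pattern graph. You simply spell out the path-concatenation details (and the bipartite-label bookkeeping) that the paper leaves implicit in its one-line appeal to Definition~\ref{def:embedding-subdivision}.
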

\begin{proof}
By Lemma~\ref{lemma_division_set}, we know that there exists one division that satisfies the condition. 
By Definition~\ref{def:embedding-subdivision}, if the factor graph of some division has an embedding $\beta$ in $G'_M$, then its reduced version, by applying the inverse of graph subdivision, also has an embedding in $G'_M$. Therefore, one only needs to consider the embedding of the most reduced version, which must be the factor graph of some division $\Tilde{H}\in \mathcal{D}(H)$ that is included in the minimum set.
\end{proof}

\begin{figure*}[t!]
	\centering
    \hspace*{-5em}
    \includegraphics[width=397.2pt]{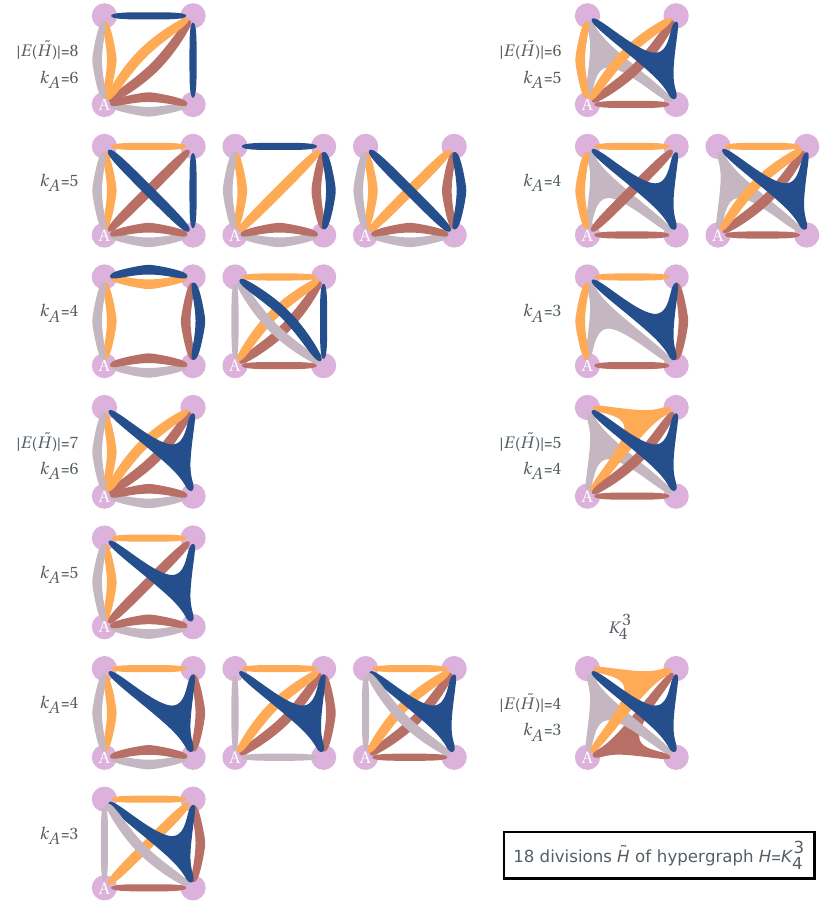}
    \caption{\textbf{The minimum set $\mathcal{D}(H)$.} For $H=K_4^3$, the minimum set $\mathcal{D}(K_4^3)$ includes $18$ divisions $\Tilde{H}$, each belonging to a different topological class. Including all topological classes ensures that, for every $G$ in which $H$ has an immersion, there is at least one $\Tilde{H}\in\mathcal{D}(H)$ of which the factor graph $\Tilde{H}'$ has an embedding in the $M$-generalised factor graph $G_{M}'$ (Fig.~\ref{fig_subdivision_G}). \hfill\hfill}\label{fig_variants}
\end{figure*}
\begin{remark}
    Lemma~\ref{lemma_2_1} indicates that we just need to select a finite subset of all possible divisions of $H$ for testing.  
    For example, if $H$ is an ordinary graph, the minimum set $\mathcal{D}(H)$ contains only one $\Tilde{H}$, which is simply $H$. Another example is the $3$-uniform complete hypergraph of size $4$, $K_4^3$. The minimum set $\mathcal{D}(K_4^3)$ can be effectively identified by replacing any number of hyperedges in $K_4^3$ with two size-$2$ edges connecting the three vertices in any order.
    This results in $18$ topological classes $\Tilde{H}\in \mathcal{D}(K_4^3)$, each with a unique topology up to graph isomorphism; see Fig.~\ref{fig_variants}. 
\end{remark}

\begin{figure}[t!]
	\centering
    \includegraphics[width=243pt]{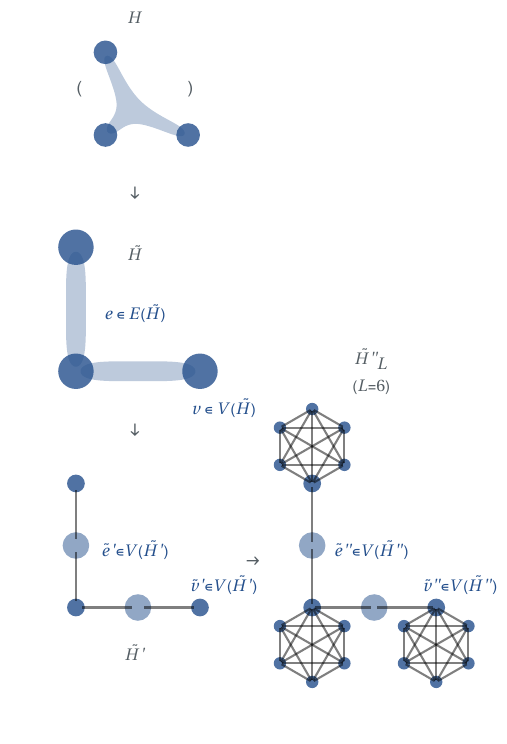}
    \caption{\textbf{Converting hypergraph $H$ to graphs.} The first step is to pick a division of $H$, given by $\Tilde{H}\in\mathcal{D}(H)$. Then, the simple graphs $\Tilde{H}'$ and $\Tilde{H}_{L}''$ are derived from $\Tilde{H}$ (similar to Fig.~\ref{fig_subdivision_G} but not the same).\hfill\hfill}
    \label{fig_subdivision_H}
\end{figure}

\subsection{Densifying graphs}
Now, we examine the complexity of testing the embedding of a factor graph $\Tilde{H}'$ in $G_M'$.
Specifically, we construct new, densified graphs for every $\Tilde{H}'$ as follows.  
We replace each $\Tilde{v}'\in V(\Tilde{H}')$ that corresponds to an original vertex $v\in V(H)$ by a complete graph $K_L$, and identify one of the vertices of $K_L$ as $\Tilde{v}'$, in the sense that all the connections of $\Tilde{v}'$ in $\Tilde{H}'$ are transferred to this vertex in the newly constructed graph. 
This yields a new densified graph $\Tilde{H}_{L}''$ with 
$|V(\Tilde{H}_{L}'')|=|V(\Tilde{H}')|+(L-1)|V(H)|$. 
Note that when defining the division of $H$, new vertices may be added to $\Tilde{H}$ [e.g.,~the open circles in Figs.~\ref{fig_division_hyperedge_c}~and~\ref{fig_division_hyperedge_d}]. Such vertices will \emph{not} be densified, as these vertices do not correspond to the ``real'' original vertices in $H$ and therefore are allowed to overlap with each other in immersion. 
Thus, there are $|V(H)|$ vertices in $\Tilde{H}_{L}''$ that must have a degree no smaller than $L-1$. The rest vertices in $\Tilde{H}'$, corresponding to (1) $E(\Tilde{H}) \subset V(\Tilde{H}')$ and (2) $V(\Tilde{H})\setminus V(H)\subset V(\Tilde{H}')$, are unchanged during the densification.

Correspondingly, we also densify the $M$-generalised factor graph of $G$, $G_{M}'$, to obtain a new graph $G_{M,L}''$ for further exploration, as in section~\ref{sec:proof1-densify}.

\begin{lemma}
\label{lemma_2_2}
Given sufficiently large $M$ and $L$, a hypergraph $H$ has an immersion $\alpha$ in a hypergraph $G$ if and only if there exists $\Tilde{H}\in\mathcal{D}(H)$ such that the densified factor graph
$\Tilde{H}_{L}''$ has an embedding $\beta$ in $G_{M,L}''$.
\end{lemma}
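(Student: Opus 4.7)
The plan is to prove the two directions by combining the densification technique from Lemma~\ref{lemma_1_2} with the division machinery of Lemma~\ref{lemma_2_1}.

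For the necessary direction, I start from an immersion $\alpha$ of $H$ in $G$ and invoke Lemma~\ref{lemma_2_1} to pick a division $\Tilde{H}\in\mathcal{D}(H)$ whose factor graph $\Tilde{H}'$ has an embedding $\beta$ in $G_{M}'$ for sufficiently large $M$. To lift $\beta$ to an embedding of $\Tilde{H}_{L}''$ in $G_{M,L}''$, I extend it on the new $L$-cliques: for each $v\in V(H)$, the corresponding vertex $\Tilde{v}'\in V(\Tilde{H}')$ is sent by $\beta$ to some $\nu_1'\in V(G_{M}')$, so I map the $K_L$ attached to $\Tilde{v}'$ in $\Tilde{H}_{L}''$ isomorphically onto the $K_L$ attached to $\nu_1'$ in $G_{M,L}''$. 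Because distinct vertices of $V(H)$ are sent by $\alpha$ to distinct vertices of $V(G)$, the chosen cliques are pairwise disjoint and disjoint from the image of the rest of $\beta$, yielding a valid embedding.

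For the sufficient direction, suppose $\Tilde{H}_{L}''$ has an embedding $\beta$ in $G_{M,L}''$ for some $\Tilde{H}\in\mathcal{D}(H)$. The key step, in direct analogy with Lemma~\ref{lemma_1_2}, is a degree-counting argument: every vertex of $\Tilde{H}_{L}''$ that was identified with an original vertex $v\in V(H)$ carries the $L-1$ edges of its attached $K_L$ together with further incidences to hyperedge-vertices of $\Tilde{H}'$, so its degree strictly exceeds $L-1$. In $G_{M,L}''$, only the vertices $\nu_1''$ corresponding to some $\nu\in V(G)$ attain such a degree, since the hyperedge-vertices $\mu''$ and the duplicate vertices $\nu_i''$ with $i\ne 1$ have degree bounded independently of $L$, and the internal vertices of the newly introduced $L$-cliques have exactly $L-1$ neighbours confined to a single clique. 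Here I must treat with care the vertices introduced by the division operation (the ``open circles'' of Fig.~\ref{fig_division_hyperedge}), which by construction in section~\ref{sec:division-H} are \emph{not} densified; this is exactly consistent with the fact that in an immersion of $H$ they are allowed to coincide with other images. For $L$ sufficiently large, $\beta$ must therefore map every densified $\Tilde{H}$-vertex to a densified vertex $\nu_1''$, and by injectivity these come from pairwise distinct $\nu\in V(G)$.

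With the vertex mapping pinned down, $\beta$ restricts to a vertex-disjoint embedding of $\Tilde{H}'$ into $G_{M}'$. Following the proof of Lemma~\ref{lemma_1_2}, vertex-disjoint paths in $G_{M}'$ translate into edge-disjoint connected subgraphs in $G$, producing an immersion of $\Tilde{H}$ in $G$. Since by the remark preceding Lemma~\ref{lemma_division_set} the original hypergraph $H$ immerses in every $\Tilde{H}\in\mathcal{D}(H)$, composing the two immersions yields an immersion of $H$ in $G$. The main obstacle I anticipate is verifying this composition rigorously in the hypergraph setting: for each $e\in E(H)$ one must glue the connected subgraph $\alpha_1(e)\subseteq\Tilde{H}$ with the connected subgraphs in $G$ that Lemma~\ref{lemma_1_2}-style arguments assign to each hyperedge of $\alpha_1(e)$, and then check that the resulting union remains connected (which follows because unions of connected subgraphs sharing vertices are connected) and that distinct $e\in E(H)$ yield edge-disjoint images in $G$ (which follows because each hyperedge of $\Tilde{H}$ lies in at most one $\alpha_1(e)$, and each hyperedge of $G$ is used by at most one hyperedge of $\Tilde{H}$). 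The bounds on $M$ and $L$ inherited from the remark after Lemma~\ref{lemma_1_2} continue to suffice, since Lemma~\ref{lem:hyperH-division-finite} guarantees that only finitely many divisions $\Tilde{H}$ need to be tested.
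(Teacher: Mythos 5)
Your overall skeleton matches the paper's proof: the necessary direction via Lemma~\ref{lemma_2_1} plus the clique-to-clique extension is exactly right, and the degree-counting argument pinning each densified vertex to some $\nu_1''$ is the paper's first step in the sufficient direction. Two points in the sufficient direction need repair, however. First, the assertion that ``$\beta$ restricts to a vertex-disjoint embedding of $\Tilde{H}'$ into $G_{M}'$'' is not automatic once the densified vertices are pinned down: you must also rule out that edges of $\Tilde{H}'$ (and the non-densified vertices of $\Tilde{H}'$, which can have small degree and so are not controlled by the degree count) are mapped into the interior of an $L$-clique of $G_{M,L}''$. The paper closes this using the Steiner-tree property: by Lemma~\ref{lem:division-e-r-leaves} every leaf of $\Tilde{H}'$ is a terminal, so every edge of $\Tilde{H}'$ lies on a path between two vertices identified with \emph{distinct} elements of $V(H)$, which are mapped to distinct $L$-cliques; such a path cannot be confined to a single clique, whose only gateway to the rest of $G_{M,L}''$ is $\nu_1''$. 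This step is absent from your argument.

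Second, your intermediate claim that this produces ``an immersion of $\Tilde{H}$ in $G$'' is precisely what the paper warns cannot be concluded: the non-densified vertices of $\Tilde{H}$ (the open circles introduced by the division) are unrestricted, so $\beta$ may send them to duplicate copies $\nu_i''$ with $i\neq 1$ --- colliding in $G$ with the image of another vertex --- or even to a hyperedge-vertex $\mu''$, so the induced vertex map of $\Tilde{H}$ into $V(G)$ is neither injective nor even well defined. Consequently ``composing the two immersions'' is not a licit step as stated. Your proof survives only because the explicit gluing you then carry out never uses injectivity beyond $V(H)$: the images of the new vertices and of the hyperedge-vertices are simply absorbed into the connected subgraphs $\alpha(e)$ for $e\in E(H)$, which is exactly how the paper phrases it (``we may not find an immersion of $\Tilde{H}$ or $\Tilde{H}'$ in $G$, but we can still find an immersion of $H$ in $G$ directly''). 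Drop the claim of an immersion of $\Tilde{H}$ in $G$ and present the gluing as the direct construction of $\alpha$; with that and the Steiner-tree step restored, the argument coincides with the paper's.
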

\begin{proof}
To prove the necessary condition (``only if''), we know that from Lemma~\ref{lemma_2_1}, there exists $\tilde{H}\in \mathcal{D}(H)$ whose factor graph $\tilde{H}'$ has an embedding in $G_{M}'$. Then, $\Tilde{H}_{L}''$ can be embedded in $G_{M,L}''$ the same way as $\Tilde{H}'$ in $G_{M}'$.

To prove the sufficient condition (``if''), following the same argument in the proof of Theorem~\ref{theorem_1}, we can show that $\beta$ can only map $\Tilde{v}''\in V(\Tilde{H}_{L}'')$, that is identified with a vertex $v\in V(H)$, 
to $\nu_1''\in V(G_{M,L}'')$ that is identified with $\nu_1'\in V(G_{M}')$. 
This allows us to show further that the edges in $\Tilde{H}_{L}''$ that come from $\Tilde{H}'$ (i.e.,~not the new edges introduced within the $L$-cliques) cannot be embedded inside the $L$-cliques in $G_{M,L}''$.
Specifically, due to the Steiner-tree nature, every edge in the factor graph $\Tilde{H}'$ must be part of a path between two vertices of $\Tilde{H}$ that are identified with different vertices in $V(H)$. Since different vertices $v\in V(H)$ are mapped to different $\nu_1''\in V(G_{M,L}'')$, the aforementioned path (together with its edges) cannot be mapped inside the $L$-cliques in $G_{M,L}''$.
Hence, only $L$-cliques in $\Tilde{H}_{L}''$ are mapped to $L$-cliques in $G_{M,L}''$, meaning that $\beta$ is also an embedding that maps $\Tilde{H}'$ to $G_{M}'$.

The remaining vertices in $\Tilde{H}'$, other than those identified with $v\in V(H)$, are unrestricted in their mapping and thus
may be mapped to the vertex duplicates, $\nu_2',\cdots,\nu_M'\in V(G_{M}')$. 
Hence, we may not find an immersion of $\Tilde{H}$ or $\Tilde{H}'$ in $G$, but we can still find an immersion of $H$ in $G$ directly. 
Indeed, since $\beta$ is restricted to mapping $v\in V(H)$ only to $\nu_1'\in V(G_{M}')$, we have no trouble finding an immersion of $H$ in $G$ the same way as in the proof of Lemma~\ref{lemma_1_2}. 
\end{proof}

Now, we give the proof of Theorem~\ref{theorem_2} as follows. 
\begin{proof}[Proof of Theorem~\ref{theorem_2}]
    From Lemma~\ref{lemma_2_2}, to determine whether a hypergraph $H$ can be immersed in $G$ is equivalent to testing if there is $\Tilde{H}\in\mathcal{D}(H)$ such that the densified factor graph $\Tilde{H}_{L}''$ has an embedding in the densified, $M$-generalised factor graph $G_{M,L}''$.  From Theorem~\ref{the:embedding-time}, there is an $O(|V(G_{M,L}'')|)$ time algorithm for the testing of one densified factor graph. 
    From Lemma~\ref{lem:hyperH-division-finite}, the set $\mathcal{D}(H)$ is finite and independent of $V(G)$.
    Therefore, the running time for testing immersion of a hypergraph $H$ is still $O(\left|V(G)\right|^6)$.
\end{proof}

\section{Dual hypergraph immersion}
\label{sec:dual}
Every hypergraph intrinsically supports a corresponding dual hypergraph known as the \emph{transpose}~\cite{robertson-seymour_rs10}.
Let $G$ be a loopless hypergraph. Its {transpose} is the hypergraph $G^\top$ where $V(G^\top) = E(G)$ and $E(G^\top) = V(G)$, with the same incidence relation as $G$. Following the definition of hypergraph immersion (Definition~\ref{definition_1}), we define the notion of \emph{dual immersion} for hypergraphs:
\begin{definition}
\label{definition_3}
Let $G,H$ be loopless hypergraphs. 
A \emph{dual immersion} of $G$ to $H$ is a function $\eta$ with domain $V(H)\cup E(H)$, such that
    \begin{enumerate}
    \item $\eta(e)\in E(G)$ for all $e\in E(H)$, and $\eta(e_1)\neq \eta(e_2)$ for all distinct $e_1,e_2\in E(H)$; 
    \item for each vertex $v\in V(H)$, if $v$ has distinct incident hyperedges $e_1,e_2,\cdots$, then $\eta(v)$ is a connected subgraph in $G$ that includes $\eta(e_1),\eta(e_2),\cdots$. 
    \item for all $v_1,v_2\in V(H)$ that do not share any hyperedge, $V(\eta(v_1)\cap \eta(v_2))=\emptyset$;
\end{enumerate} 
\end{definition}

In other words, there is an injective mapping from hyperedges in $H$ to hyperedges in $G$ and from ``vertices'', or more precisely sets of edges that incident on the same vertex, in $H$, to \textit{vertex-disjoint} connected subgraphs in $G$.  
As in the definition in ordinary graphs in section~\ref{sec:preliminary} and our extension to hypergraphs in section~\ref{sec:hypergraph}, immersion can be equivalently defined through operations in hypergraphs. 
Correspondingly, we now define the \textit{vertex coalescence} operation for the dual immersion problem, while maintaining the dewetting operation as in Definition~\ref{definition_2}. We then give an alternative definition for dual immersion as follows. 
\begin{definition}
\label{definition_4}
    Let $G,H$ be loopless hypergraphs. 
    A dual immersion of $H$ in $G$ exists if and only if $H$ can be obtained from a subgraph of $G$ by the following two operations (Fig.~\ref{fig_operation-dual}). 
    \begin{itemize}
        \item \emph{Vertex Coalesce}:
        Contract two vertices that share at least one hyperedge, resulting in a new vertex that is incident to all hyperedges originally incident to the two vertices. 
        \item \emph{Dewet} (same as in Definition~\ref{definition_2}): Detach a hyperedge from one vertex that the hyperedge is incident on.
\end{itemize}
\end{definition}
\begin{figure}[h!]
    \centering
    \begin{minipage}[b]{120pt}
		\centering
		{\includegraphics[width=120pt]{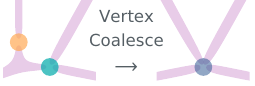}\subcaption{\label{fig_coalesce_vertex}}}
	\end{minipage}
    \hspace{30pt}
    \begin{minipage}[b]{120pt}
		\centering
		{\includegraphics[width=120pt]{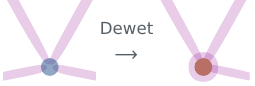}
        \subcaption{\label{fig_dewet_vertex}}}
	\end{minipage}
    \vspace{-3mm}
    \caption{\textbf{Dual immersion.}
    \subref{fig_coalesce_vertex}~The ``vertex coalescence'' operation involves merging two connected vertices, resulting in a new vertex that is incident to all hyperedges originally incident to the two vertices. \subref{fig_dewet_vertex}~The ``dewetting'' operation, same as in Fig.~\ref{fig_evaporate}, detaches a hyperedge from a vertex (some vertex in the lower left that is not shown here), thus reducing the hyperedge's size by one.
    \hfill\hfill}
    \label{fig_operation-dual}
\end{figure}
The vertex coalescence operation is related to the \textit{vertex identification}, or \textit{vertex contraction} operation, where two vertices $v_i, v_j$ in a graph are replaced by a single vertex $v$ such that $v$ is adjacent to the union of the vertices to which $v_i, v_j$ were originally adjacent.
However, they are intrinsically different, since vertex coalescence can only be applied to vertices that are directly connected by a hyperedge.
After vertex coalescence, all hyperedges remain intact and not removed, even if some have a size of one.

A potential application of dual immersion is in quantum memory networks~\cite{q-netw-memory_mlpnk24}. In such networks, distributed memories are represented by spatially distributed vertices. When two vertices establish a high-fidelity quantum communication channel, they are considered connected by an entanglement link, enabling the sharing of their total memories. This operation can be viewed as a form of vertex contraction between connected vertices, resulting in a new effective vertex that retains the incidence information of all its component vertices, thereby relating to the dual immersion problem.

It is worth noting that the duality between immersion and dual immersion guarantees that these two problems are fully equivalent, leading to the following theorem:
\begin{theorem}
A hypergraph $H$ can be immersed in $G$ if and only if $H^\top$ can be dual-immersed in $G^\top$.
\end{theorem}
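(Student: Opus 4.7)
My plan is to prove the theorem via the operations-based characterizations of both notions. By Proposition~\ref{pro:definition_equiv}, an immersion of $H$ in $G$ is equivalent to obtaining $H$ from a subhypergraph of $G$ by a sequence of edge coalescence and dewetting operations (Definition~\ref{definition_2}); by Definition~\ref{definition_4}, a dual immersion of $H^\top$ in $G^\top$ is equivalent to obtaining $H^\top$ from a subhypergraph of $G^\top$ by a sequence of vertex coalescence and dewetting operations. It therefore suffices to establish a bijective correspondence between such operation sequences in $G$ and in $G^\top$ under the transpose.

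The correspondence rests on two observations. First, the transpose is an involution ($G^{\top\top}=G$) that maps subhypergraphs of $G$ bijectively to subhypergraphs of $G^\top$, since $V(G'^\top)=E(G')$, $E(G'^\top)=V(G')$, and the incidence relation of $G$ restricts to and is preserved on both sides. Second, each defining operation dualizes cleanly under the transpose: edge coalescence in $G$, which merges two hyperedges sharing a vertex into their union, corresponds to merging two vertices (the former hyperedges, now vertices of $G^\top$) sharing a hyperedge (the former shared vertex of $G$, now a hyperedge of $G^\top$), which is precisely vertex coalescence in $G^\top$; and dewetting is self-dual, since detaching a hyperedge $e$ from its vertex $v$ in $G$ is the same, under transpose, as detaching the hyperedge $v$ from the vertex $e$ in $G^\top$. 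Combining these, a sequence of edge coalescence and dewetting operations taking $G'\subseteq G$ to $H$ translates, step by step, into a sequence of vertex coalescence and dewetting operations taking $G'^\top\subseteq G^\top$ to $H^\top$; the involutivity of the transpose yields the converse, and invoking the two equivalences in both directions completes the proof.

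The main obstacle I anticipate is bookkeeping: verifying rigorously that the intermediate hypergraphs produced by the two parallel operation sequences remain transposes of one another at every step. This requires careful local checks of the evolution of vertex sets, edge sets, and incidences under each operation, with particular attention to edge cases such as hyperedges reduced by dewetting to small or empty sizes, and multi-hyperedges arising from coalescence. These checks are local but must be assembled consistently into a global correspondence; once done, the biconditional follows immediately from the operations-based equivalences.
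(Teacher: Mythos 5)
Your proposal is essentially correct, but it takes a genuinely different route from the paper. The paper's own proof is a one-line appeal to the definitions: an immersion $\alpha$ of $H$ in $G$ (Definition~\ref{definition_1}) transposes directly into a dual immersion $\eta$ of $H^\top$ in $G^\top$ (Definition~\ref{definition_3}), since the transpose swaps vertices with hyperedges, carries connected subgraphs to connected subgraphs, and turns edge-disjointness into vertex-disjointness; involutivity of $(\cdot)^\top$ gives the converse. You instead route the argument through the operational characterizations --- Proposition~\ref{pro:definition_equiv} on one side and Definition~\ref{definition_4} on the other --- and check that edge coalescence transposes to vertex coalescence and that dewetting is self-dual. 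Your dualization of the two operations is correct, and your approach has the virtue of making the duality concrete at the level of individual moves (and of sidestepping the slight asymmetry between condition~(3) of Definition~\ref{definition_1}, which quantifies over all distinct pairs, and condition~(3) of Definition~\ref{definition_3}, which quantifies only over non-adjacent pairs). The cost is that your argument leans on the equivalence of Definitions~\ref{definition_3} and~\ref{definition_4}, which the paper asserts as a definition rather than proving (in contrast to the primal case, where that equivalence is Proposition~\ref{pro:definition_equiv} with a proof); if the theorem is read as a statement about Definition~\ref{definition_3}'s notion of dual immersion, your proof is complete only modulo that unproven equivalence, whereas the paper's definitional argument addresses Definition~\ref{definition_3} directly. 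The bookkeeping you flag (intermediate hypergraphs staying transposes of one another, behaviour of subgraph restriction under transpose, shrunken or size-one hyperedges) is real but routine, and does not hide an obstruction.
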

\begin{proof}
    Proven by Definitions~\ref{definition_1} and~\ref{definition_3}.
\end{proof}
This provides an alternative view of the immersion problem. More importantly, the transpose of an ordinary graph is necessarily a hypergraph if the graph is not a regular graph of degree $2$, hence the results in this paper also provide theoretical guarantees for the immersion problem in more general settings. 

\bibliographystyle{amsplain}
\bibliography{refs}
\end{document}